\newtheorem{theorem}{Theorem}
\newtheorem{proposition}[theorem]{Proposition}
\newtheorem{lemma}[theorem]{Lemma}
\newtheorem{corollary}[theorem]{Corollary}
\newtheorem{remark}{Remark}
\newtheorem{conjecture}{Conjecture}
\newtheorem{example}{Example}
\newcommand{\F}{\mathbb{F}}
\newcommand{\be}{\begin{eqnarray}}
\newcommand{\ee}{\end{eqnarray}}
\newcommand{\nn}{{\nonumber}}
\newcommand{\dd}{\displaystyle}
\newcommand{\hull}{\mbox{\rm Hull}}
\begin{document}
\date{}
\title{Sequence of Numbers of Linear Codes with Increasing Hull Dimensions}
\author{Stefka Bouyuklieva \thanks{Stefka Bouyuklieva is with the Faculty of Mathematics and Informatics, St. Cyril and St. Methodius University of Veliko Tarnovo, 5000
Veliko Tarnovo, Bulgaria, (email: stefka@ts.uni-vt.bg).},~
Iliya Bouyukliev \thanks{Iliya Bouyukliev is with the Institute of Mathematics and Informatics, Bulgarian Academy of Sciences, 5000 Veliko Tarnovo, Bulgaria, (email: iliyab@math.bas.bg).},~
and Ferruh \"{O}zbudak \thanks{Ferruh \"{O}zbudak is with the Faculty of Engineering and Natural Sciences, Sabanc{\i} University, 34956 Istanbul, T\"{u}rkiye, (email: ferruh.ozbudak@sabanciuniv.edu).} }
\maketitle

\begin{abstract}
The hull of a linear code $\mathcal{C}$ is the intersection of $\mathcal{C}$ with its dual code. We present and analyze the sequence of numbers of linear codes with increasing hull dimension but
the same length $n$ and dimension $k$. We also present classification results for binary and ternary linear codes with trivial hulls (LCD and self-orthogonal) for some values of the length $n$ and dimension $k$, comparing the obtained numbers with the number of all linear codes for the same $n$ and $k$.
\end{abstract}

\section{Introduction}
\label{intro}

Let $\F_q$ be a finite field with $q$ elements, and $n$ be a positive integer.
We consider $\F_q^n$ as an $n$-dimensional vector (linear) space over $\F_q$.
Let $\cdot$ be the Euclidean inner product on $\F_q^n$ given by
\be
(a_1, \ldots, a_n) \cdot (b_1,\ldots, b_n) =a_1b_1 + \cdots + a_nb_n.
\nn\ee
Any linear subspace $\mathcal{C}$ of $\F_q^n$ is called a {\it linear $q$-ary code} of {\it length} $n$ and {\it dimension} $k$, where $k=\dim \mathcal{C}$.
We say that $\mathcal{C}$ is an $[n,k]_q$ code in short. Throughout the paper we consider only $\F_q$-linear codes in $\F_q^n$.

For an $[n,k]_q$ code $\mathcal{C}$, let $\mathcal{C}^\perp$ be the subset
\be
\mathcal{C}^\perp=\{y \in \F_q^n: x \cdot y=0 \; \mbox{for all} \; x \in \mathcal{C}\},
\nn\ee
where we use the Euclidean inner product. As it is a non-degenerate inner product, we conclude that $\mathcal{C}^\perp$ is an $[n,n-k]_q$ code. Moreover $\mathcal{C}^\perp$ is called the {\it dual code} of $\mathcal{C}$.

The {\it hull} of a linear code $\mathcal{C}$ is the linear space $\mathcal{C} \cap \mathcal{C}^\perp$, that we denote as $\hull(\mathcal{C})$. Note that $\hull(\mathcal{C})=\hull(\mathcal{C}^\perp)$. The hull of a linear code $\mathcal{C}$ was introduced by Assmus, Jr. and Key \cite{Assmus-Key}, where they used it for classification of finite projective planes. The study of hulls of linear codes has many interesting applications ranging from the construction of quantum error correcting codes to post-quantum cryptography. We refer, for example, to \cite{quantum+hull} and the references therein for further details of such connections and applications.

Two extreme cases of hulls of linear codes have been studied separately. If $\dim \hull(\mathcal{C})=0$, then $\mathcal{C}$ is called a {\it linear complementary dual (LCD)} code. LCD codes were introduced  by Massey \cite{Massey} in 1992 in the framework of coding theory. Sendrier proved that LCD codes meet the Gilbert-Varshamov bound \cite{Sendrier GV}. In \cite{Carlet_Guilley}, Carlet and Guilley investigated an  application of LCD codes against side-channel attacks (SCA) and fault injection attacks (FIA) and obtained further interesting results. This motivated further research and the study of LCD codes have been a very active area in the recent years. We refer, for example, to \cite{Carlet_Pellikaan,Dougherty-Kim-Sole-LCD,Kim-LCD,Harada-LCD} and the related references.

Another extreme case is when  $\dim \hull(\mathcal{C})= \dim \mathcal{C}$, which means that $\mathcal{C} \subseteq \mathcal{C}^\perp$. In such a case $\mathcal{C}$ is called {\it self-orthogonal (SO)} code.
A very important particular case of SO codes is the case $\mathcal{C}=\mathcal{C}^\perp$, which holds only if $n$ is even and $n=2k$, where $\mathcal{C}$ is an $[n,k]_q$ code. Such $q$-ary linear codes are called {\it self-dual} codes. The study of self-dual codes has very interesting connections and applications to various breaches of mathematics.  We refer, for example, to the influential books \cite{Conway_Sloane, Nebe} for further details on this topic.

As $\hull(\mathcal{C})=\hull(\mathcal{C}^\perp)$, we assume that $\mathcal{C}$ is an $[n,k]_q$ code with $k \le n/2$ throughout this paper. For integers $0 \le \ell \le k$, let $A_{n,k,\ell,q}$ denote the number of $q$-ary codes $\mathcal{C} \subseteq \F_q^n$ such that $\dim \mathcal{C}= k$ and $\dim \hull(\mathcal{C})=\ell$. In \cite{Sendrier_hull}\ Sendrier established an important mass formula giving $A_{n,k,\ell,q}$, that we also refer to Theorem \ref{theorem.Sendrier} below for its statement. The formula of \cite{Sendrier_hull} is quite complicated including the number of self-orthogonal codes of various dimensions and complex sums.

In the first part of this paper, we focus on the following sequence of numbers of linear codes with increasing hull dimensions,
which comprehensively encapsulates the number of
$q$-ary $[n,k]$ codes
(and equivalently  $[n,n-k]$ codes) for all possible hull dimensions in a structured manner.
\be \label{sequence S}
\mathcal{S}_{n,k,q}:=\left(A_{n,k,0,q},A_{n,k,1,q}, \cdots, A_{n,k,\ell,q}, \ldots, A_{n,k,k,q}\right).
\ee
Note that $\mathcal{S}_{n,k,q}$ is a finite sequence of integers, and its length is $k+1$. Using numerical results we observe that the sequence in (\ref{sequence S}) is always {\bf strictly decreasing} independent from $n$, $k$ or $q$. In fact, in an earlier version of this paper, using \cite{Sendrier_hull} and a detailed study of the corresponding complicated mathematical formulae, it was proved that
\be \label{result decreasing}
A_{n,k,0,q} > A_{n,k,1,q} > \cdots > A_{n,k,\ell,q} > \cdots > A_{n,k,k,q}.
\ee

One of the main results of this paper is a  far reaching improvement of the result in (\ref{result decreasing}). For integers $0 \le \ell \le k-1$, let $\mu_{n,k,\ell,q}$ be the largest integer such that
\be 
A_{n,k,\ell,q} \ge \mu_{n,k,\ell,q} A_{n,k,\ell+1,q}.
\nn\ee
Note that  the result in (\ref{result decreasing}) is equivalent to the statement that for integers $0 \le \ell \le k-1$ we have
\be \label{result2 decreasing}
\mu_{n,k,\ell,q} \ge 1
\ee
Namely in Theorem \ref{thm:ineq} below we prove that for integers $0 \le \ell \le k-1$ we have
\be \label{result_new}
\mu_{n,k,\ell,q} \ge \left\{
\begin{array}{rl}
q^{\ell+1} -1 & \mbox{if Condition * does not hold}, \\ \\
\dd \frac{q^{\ell+1}-1}{2} & \mbox{if Condition * holds}
\end{array}
\right.
\ee
Here Condition * is the condition that
\be
\mbox{$q$ is odd, $n \equiv 2 \pmod 4$, and $-1$ is not a square in $\F_q^*$}.
\nn\ee
 Moreover, we determine when the bound in (\ref{result_new}) is tight.

 In our proofs we use a recent simple and alternative mass formulae for the integers $A_{n,k,\ell.q}$ presented in \cite{LiShiLing}.
 We refer to Theorems \ref{theorem.Li et al.even} and \ref{theorem.Li et al.odd} below for the statements of the corresponding formulae in certain cases. We observe that the formulae in \cite{Sendrier_hull} and \cite{LiShiLing} are very different in their presentations. For example, the formula in \cite{Sendrier_hull} consists of certain inner summations, and hence it is a sum formula. However the formulae in \cite{LiShiLing} consist of certain inner multiplies, and hence they are multiplication formulae.
 In our problem, it turns out that the result in \cite{LiShiLing}
is much more suitable for studying the sequence $\mathcal{S}_{n,k,q}$ providing improvement in (\ref{result_new}).
The result in (\ref{result_new}) requires a long and detailed study we present in Section \ref{sec:proofs} below.
We could not establish such a strong result using the mass formula of \cite{Sendrier_hull}.

In the second part of our paper we classify all linear, self-orthogonal and LCD codes over $\F_2$ and $\F_3$ for different lengths $n\le 20$ and dimensions $k\le 10$. If for given $n$ and $k$ the number of all inequivalent linear $[n,k]$ codes is huge, we give a classification result only for the optimal codes with the considered length and dimension.

This paper is organized as follows. In Section \ref{sect:preliminaries}, we present the needed definitions, theorems and formulae to prove our result. Section \ref{sec:proofs} is devoted to the proof of the main theorem, namely Theorem \ref{thm:ineq}. Note that $A_{n,k,\ell,q}$ is the number of all different codes with the corresponding parameters, but the number of the inequivalent among them is much smaller. Since we do not have a formula for this number, we did some experiments and classified the binary and ternary linear $[n,k]$ codes for some values of $n$ and $k$. We present and analyze the obtained results in Section \ref{Results}. We end the paper with three tables consisting of enumeration results for inequivalent linear codes with different lengths and dimensions over the fields with two and three elements.


\section{Preliminaries}
\label{sect:preliminaries}

 In this section, we present the basic definitions, theorems and formulae that we need for the proof of the main theorem, as well as for the computational results in Section \ref{Results}.

As we already defined in the Introduction, a linear $[n,k,d]$ $q$-ary code $\mathcal{C}$ is a $k$-dimensional subspace of the
vector space $\F_q^n$, and $d$ is the
smallest weight among all non-zero codewords of $\mathcal{C}$, called the \textit{minimum weight} (or minimum distance) of the code. A code is called {\em even} if all its codewords have even weights. The orthogonal complement of $\mathcal{C}$ according to the defined (in our case Euclidean) inner product is called the \textit{dual code} of $\mathcal{C}$ and denoted by $\mathcal{C}^{\perp}$. The intersection $\mathcal{C}\cap \mathcal{C}^\perp$ is called the hull of the code. As mentioned in the introduction, the dimension of the hull can be at least 0 and at most $\min\{k,n-k\}$.


Two linear $q$-ary codes $\mathcal{C}_1$ and $\mathcal{C}_2$ are equivalent if the codewords of $\mathcal{C}_2$ can be obtained from
the codewords of $\mathcal{C}_1$ via a sequence of transformations of the
following types:
\begin{itemize}
\item[(1)] permutation on the set of coordinate positions;
\item[(2)] multiplication of the elements in a given position by a
non-zero element of $\F_q$;
\item[(3)] application of a field automorphism to the elements in all
coordinate positions.
\end{itemize}
An automorphism of a linear code $\mathcal{C}$ is a sequence of the
transformations $(1)-(3)$ which maps
each codeword of $\mathcal{C}$ onto a codeword of the same code. The set of all automorphisms of $\mathcal{C}$ forms a group called the automorphism group $Aut(\mathcal{C})$ of the code. The presented formulae do not count the equivalence but in Section \ref{Results} we enumerate only inequivalent codes.

If the considered inner product is Euclidean, the dimension of the hull  is an invariant under the defined equivalence relation in the cases $q = 2$ and $q = 3$, but if $q\ge 4$, any linear code over
$\F_q$ is equivalent to an Euclidean LCD code \cite{Carlet_Pellikaan}. If $q=p^{2s}$, where $p$ is the characteristic of the field, we can consider the Hermitian inner product over $\F_q$ defined by $$(x,y)=\sum_{i=1}^n x_iy_i^{\sqrt{q}}, \ \forall x=(x_1,\ldots,x_n), y=(y_1,\ldots,y_n)\in \F_q^n.$$
For this inner product, the dimension of the hull is an invariant in the case of the quaternary codes ($q=4$), and if $q>4$, any linear code over
$\F_q$, $q=p^{2s}$, is equivalent to an Hermitian LCD code \cite{Carlet_Pellikaan}. In this paper, we consider only the Euclidean inner product.

Formulae for counting the number $\sigma(n,k)$ of all different $q$-ary self-orthogonal codes of length $n$ and dimension $k$
were proven in \cite{Pless_sigma}. 


\begin{theorem}\label{thm:sigma}
Let $m=\lfloor n/2\rfloor$ and $\pi_{n,k}=\prod_{i=1}^k\frac{q^{2m-2i+2}-1}{q^i-1}$. For all $1\le k\le m$, we have
\[ \sigma_{n,k}=\left\{ \begin{array}{rl}
\pi_{n,k},&\mbox{if} \ n \ \text{is odd},\\
\frac{q^{n-k}-1}{q^n-1}\pi_{n,k},&\mbox{if} \ n \ \mbox{and} \ q \ \text{are even},\\
\frac{q^{m-k}-1}{q^m-1}\pi_{n,k},&\mbox{if} \ n\equiv 2\pmod 4 \ \mbox{and} \ q\equiv 3\pmod 4,\\
\frac{q^{m-k}+1}{q^m+1}\pi_{n,k},&\mbox{if} \ (n\equiv 0\pmod 4)\\
& \ \mbox{or} \ (n\equiv 2\pmod 4 \ \mbox{and} \ q\equiv 1\pmod 4).
\end{array}\right.\]
\end{theorem}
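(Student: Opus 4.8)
The plan is to read $\sigma_{n,k}$ geometrically and count by an ordered-basis recursion. A self-orthogonal code is precisely a $k$-dimensional totally isotropic subspace of $\F_q^n$ for the symmetric bilinear form $B(x,y)=x\cdot y$; when $q$ is odd this coincides with a totally singular subspace of the quadratic form $Q(x)=\sum_{i=1}^n x_i^2$, whose polarization is $2B$. I would first count ordered $k$-tuples $(v_1,\dots,v_k)$ of linearly independent vectors spanning such a subspace, and then divide by $|GL_k(\F_q)|=q^{\binom k2}\prod_{i=1}^k(q^i-1)$, since each $k$-dimensional code has exactly that many ordered bases. Building the tuple one vector at a time, if $W=\langle v_1,\dots,v_{j-1}\rangle$ is totally isotropic then $v_j$ must be an isotropic vector of $W^\perp\setminus W$; as $W\subseteq W^\perp$ with $\dim W^\perp=n-(j-1)$ and the form descends to a nondegenerate form on $\overline V_j=W^\perp/W$ of dimension $n-2(j-1)$, the number of admissible $v_j$ is $q^{\,j-1}(I_j-1)$, where $I_j$ is the number of isotropic vectors (including $0$) in $\overline V_j$.

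For odd $q$ the computation is uniform, because quotienting by a totally singular subspace preserves the Witt type. The type of $\sum x_i^2$ in $n=2m$ variables is governed by its discriminant: it is hyperbolic (type $\eta=+1$) exactly when $(-1)^m$ is a square in $\F_q^*$ and elliptic ($\eta=-1$) otherwise, and unwinding this through $q\bmod 4$ and $m\bmod 2$ reproduces precisely the congruence conditions separating the last two cases. I would then insert the classical counts $I_j=q^{N_j-1}$ when $N_j=n-2(j-1)$ is odd, and $I_j=q^{N_j-1}+\eta\,q^{N_j/2-1}(q-1)$ when $N_j$ is even. The odd-$n$ case collapses at once: $\prod_j q^{\,j-1}(q^{\,n-2j+1}-1)$ divided by $|GL_k(\F_q)|$ telescopes to $\pi_{n,k}$. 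In the even-$n$ case the key algebraic step is the factorization
\[
I_j-1=(q^{\,m_j}-\eta)(q^{\,m_j-1}+\eta),\qquad m_j=m-j+1,
\]
after which the product over $j$ telescopes against $\pi_{n,k}=\prod_i\tfrac{(q^{m-i+1}-1)(q^{m-i+1}+1)}{q^i-1}$ and leaves the factor $\tfrac{q^{m-k}+\eta}{q^m+\eta}$, that is $\tfrac{q^{m-k}-1}{q^m-1}$ in the elliptic case and $\tfrac{q^{m-k}+1}{q^m+1}$ in the hyperbolic case.

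The genuinely harder case is $q$ even, and I expect it to be the main obstacle. Here $B$ is symmetric but \emph{not} alternating, the map $x\mapsto B(x,x)=(\sum_i x_i)^2$ is additive, and the naive recursion fails because the quotient form on $\overline V_j$ is alternating or not according to whether the all-ones vector $\mathbf 1$ lies in $W$, so $I_j$ is not uniform. To bypass this I would exploit the fixed geometry of $\mathbf 1$: every self-orthogonal code satisfies $B(x,x)=0$ and hence lies in the hyperplane $H=\mathbf 1^\perp$; since $n$ is even, $\mathbf 1\in H$, the restriction $B|_H$ has one-dimensional radical $\langle\mathbf 1\rangle$, and the quotient $S=H/\langle\mathbf 1\rangle$ is a nondegenerate symplectic space of dimension $n-2$. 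A subspace $\mathcal C\subseteq H$ is totally isotropic iff its image in $S$ is, so I split the count by whether $\mathbf 1\in\mathcal C$: codes containing $\mathbf 1$ correspond bijectively to totally isotropic $(k-1)$-subspaces of $S$, while codes avoiding $\mathbf 1$ are the $q^k$ lifts not through $\mathbf 1$ of each totally isotropic $k$-subspace of $S$. This gives
\[
\sigma_{n,k}=t_{k-1}(S)+q^{k}\,t_{k}(S),\qquad t_d(S)=\prod_{j=1}^d\frac{q^{\,2m-2j}-1}{q^{\,j}-1},
\]
the standard symplectic totally-isotropic count for $\dim S=2(m-1)$. Writing $t_{k-1}(S)=\tfrac{q^k-1}{q^{2m}-1}\pi_{n,k}$ and $t_k(S)=\tfrac{q^{2m-2k}-1}{q^{2m}-1}\pi_{n,k}$ and adding, the bracket simplifies to $q^{2m-k}-1$, yielding $\sigma_{n,k}=\tfrac{q^{n-k}-1}{q^n-1}\pi_{n,k}$. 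The only delicate points are verifying that quotients preserve the form type (Witt theory), pinning down the type of $\sum x_i^2$, and the clean identification of the radical $\langle\mathbf 1\rangle$ that converts the characteristic-two problem into a symplectic one.
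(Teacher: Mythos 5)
The paper itself gives no proof of this theorem: it is quoted as a known result of Pless \cite{Pless_sigma}, so there is no internal argument to compare against, and your proposal stands as a self-contained derivation. In substance it is correct where it applies. I checked the key computations: the ordered-basis recursion with $q^{j-1}(I_j-1)$ admissible extensions is sound (each coset $v+W$ of the totally isotropic $W$ consists entirely of isotropic vectors, so dividing by $q^{j-1}$ and subtracting the coset of $0$ is legitimate); the factorization $I_j-1=(q^{m_j}-\eta)(q^{m_j-1}+\eta)$ is an identity since $\eta^2=1$; the telescoping against $\pi_{n,k}=\prod_i\frac{(q^{m-i+1}-1)(q^{m-i+1}+1)}{q^i-1}$ does leave exactly $\frac{q^{m-k}+\eta}{q^m+\eta}$; the discriminant criterion ($\sum x_i^2$ hyperbolic in dimension $2m$ iff $(-1)^m$ is a square) reproduces precisely the theorem's split between the third and fourth cases; and in the characteristic-two case the identities $t_{k-1}(S)=\frac{q^k-1}{q^{2m}-1}\pi_{n,k}$, $t_k(S)=\frac{q^{2m-2k}-1}{q^{2m}-1}\pi_{n,k}$ give $t_{k-1}(S)+q^kt_k(S)=\frac{q^{n-k}-1}{q^n-1}\pi_{n,k}$, and the argument even survives the boundary case $k=m$, where $t_k(S)=0$ correctly records that no such $k$-subspace of $S$ exists.

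There is, however, one subcase your write-up does not cover: $q$ even and $n$ odd. The theorem's first line ($\sigma_{n,k}=\pi_{n,k}$ for odd $n$) is asserted for all $q$, but your odd-$n$ computation sits inside the odd-$q$ paragraph and uses quadratic-form zero counts and Witt theory that are not available in characteristic $2$, while your even-$q$ argument explicitly assumes $n$ even (you invoke $\mathbf{1}\in H$). The gap closes with your own hyperplane trick: for $q$ even and $n$ odd, $\mathbf{1}\cdot\mathbf{1}=n\cdot 1\ne 0$, so $\mathbf{1}\notin H=\mathbf{1}^\perp$ and $B|_H$ is a \emph{nondegenerate} alternating form on the $(n-1)$-dimensional space $H$, where $n-1=2m$; every self-orthogonal code lies in $H$ (as $B(x,x)=(\sum_i x_i)^2$) and is exactly a totally isotropic subspace of $H$, whence $\sigma_{n,k}=\prod_{j=1}^k\frac{q^{2m-2j+2}-1}{q^j-1}=\pi_{n,k}$ by the standard symplectic count you already use. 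With that paragraph added your proof is complete, and it is in the same spirit as Pless's original isotropic-subspace counting, rendered in a clean modern form (in particular the reduction of the characteristic-two problem to symplectic geometry via the radical $\langle\mathbf{1}\rangle$).
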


Note that $\sigma_{n,0}=1$. 
%
%
Formulae for the number of all linear $[n,k]_q$ codes with hull dimension $\ell$ are proved by Sendrier.

\begin{theorem}\cite{Sendrier_hull} \label{theorem.Sendrier}
Let $k$ and $n\ge 2k$ be positive integers. If $\sigma_{n,i}$ is the number of all self-orthogonal $[n,i]_q$ codes, $i\le k$, then the number of all $[n,k]_q$ codes whose hull has dimension $\ell$, $\ell\le k$, is equal to
\begin{equation}
A_{n,k,\ell,q}=\sum_{i=l}^k {n-2i\brack k-i}_q{i\brack l}_q(-1)^{i-l}q^{i-l\choose 2}\sigma_{n,i}.
\end{equation}
\end{theorem}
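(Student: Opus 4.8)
The plan is to avoid manipulating the closed form directly and instead derive it by a double-counting argument followed by Gaussian (\(q\)-binomial) inversion. First I would count, in two ways, the number of pairs \((H,\mathcal{C})\) where \(H\) is a self-orthogonal \([n,i]_q\) code, \(\mathcal{C}\) is an \([n,k]_q\) code, and \(H\subseteq \hull(\mathcal{C})\). The organizing observation is that \(H\subseteq \hull(\mathcal{C})\) holds exactly when \(H\subseteq \mathcal{C}\) and \(H\subseteq \mathcal{C}^\perp\); since \(H\subseteq \mathcal{C}^\perp\) is equivalent to \(\mathcal{C}\subseteq H^\perp\), this is the same as the chain \(H\subseteq \mathcal{C}\subseteq H^\perp\).

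Counting one way, I fix a self-orthogonal \(H\) of dimension \(i\); there are \(\sigma_{n,i}\) of these. Because \(H\) is self-orthogonal we have \(H\subseteq H^\perp\) with \(\dim H^\perp=n-i\), so the codes \(\mathcal{C}\) satisfying \(H\subseteq \mathcal{C}\subseteq H^\perp\) and \(\dim\mathcal{C}=k\) are in bijection with the \((k-i)\)-dimensional subspaces of the quotient space \(H^\perp/H\), which has dimension \(n-2i\). Their number is the Gaussian binomial \({n-2i\brack k-i}_q\), so the total number of pairs equals \(\sigma_{n,i}{n-2i\brack k-i}_q\).

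Counting the other way, I fix \(\mathcal{C}\) and group by \(\ell=\dim\hull(\mathcal{C})\). Here the crucial point is that \(\hull(\mathcal{C})\) is itself self-orthogonal, and consequently \emph{every} subspace \(H\subseteq \hull(\mathcal{C})\) is self-orthogonal, since \(H\subseteq \hull(\mathcal{C})\subseteq \hull(\mathcal{C})^\perp\subseteq H^\perp\). Thus the self-orthogonal \(i\)-dimensional subspaces contained in \(\hull(\mathcal{C})\) are precisely the \(i\)-dimensional subspaces of \(\hull(\mathcal{C})\), of which there are \({\ell\brack i}_q\). Summing over all \(\mathcal{C}\) and equating the two counts gives, for each \(i\),
\[
\sigma_{n,i}{n-2i\brack k-i}_q=\sum_{\ell=i}^{k}{\ell\brack i}_q A_{n,k,\ell,q}.
\]

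Finally I would invert this triangular system. Writing \(f(i)=\sigma_{n,i}{n-2i\brack k-i}_q\) and \(g(\ell)=A_{n,k,\ell,q}\), the relation is \(f(i)=\sum_{\ell\ge i}{\ell\brack i}_q g(\ell)\), and the Gaussian analogue of binomial inversion on the subspace lattice yields
\[
g(\ell)=\sum_{i\ge \ell}(-1)^{i-\ell}q^{\binom{i-\ell}{2}}{i\brack \ell}_q f(i),
\]
which is exactly the claimed formula once \(f(i)\) is substituted and the range is truncated at \(i=k\) (the factor \({n-2i\brack k-i}_q\) vanishes for \(i>k\)). I expect the main conceptual step to be the two self-orthogonality observations — that \(\hull(\mathcal{C})\) and all of its subspaces are self-orthogonal — since these are what reduce the ``self-orthogonal subspaces of the hull'' count to a plain Gaussian-binomial subspace count and make both sides of the identity clean; the inversion itself is then routine once the \(q\)-Möbius identity \(\sum_{i}(-1)^{i-\ell}q^{\binom{i-\ell}{2}}{i\brack \ell}_q{j\brack i}_q=\delta_{\ell j}\) is invoked.
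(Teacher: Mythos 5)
Your proof is correct. Note that the paper does not prove this theorem at all --- it is quoted from \cite{Sendrier_hull} as a known result --- so there is no in-paper argument to compare against; your write-up is in essence the classical derivation of Sendrier's mass formula. Each step checks out: $H\subseteq \hull(\mathcal{C})$ is indeed equivalent to the chain $H\subseteq\mathcal{C}\subseteq H^{\perp}$; the hull is totally isotropic, so its $i$-dimensional subspaces are exactly the self-orthogonal ones, giving the clean count ${\ell \brack i}_q$; the resulting triangular relation $\sigma_{n,i}{n-2i\brack k-i}_q=\sum_{\ell\ge i}{\ell\brack i}_q A_{n,k,\ell,q}$ is correct; and the orthogonality relation $\sum_{i}(-1)^{i-\ell}q^{\binom{i-\ell}{2}}{i\brack \ell}_q{j\brack i}_q=\delta_{\ell j}$ that you invoke for the inversion follows from the $q$-binomial theorem $\sum_{r=0}^{m}(-1)^r q^{\binom{r}{2}}{m\brack r}_q=\delta_{m,0}$, so the final substitution yields exactly the stated formula, with the hypothesis $n\ge 2k$ guaranteeing that hull dimensions do not exceed $k$ and the system stays finite and triangular.
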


In the above formula, ${n\brack k}_q$ is the Gaussian binomial coefficient, defined by
\begin{equation}
{n\brack 0}_q=1, \ {n\brack k}_q=\frac{(q^n-1)(q^{n-1}-1)\ldots(q^{n-k+1}-1)}{(q^k-1)(q^{k-1}-1)\ldots(q-1)}, \mbox{if} \ 1\le k\le n.
\end{equation}

We use the following properties of the Gaussian coefficients for $0\le k\le n-1$:
\begin{eqnarray}
{n\brack n-k}_q &=& {n\brack k}_q, \\
  {n+1\brack k}_q &=& \frac{q^{n+1}-1}{q^{n-k+1}-1}{n\brack k}_q,\\
  {n\brack k+1}_q &=& \frac{q^{n-k}-1}{q^{k+1}-1}{n\brack k}_q, \\
  {n+1\brack k+1}_q&=&\frac{q^{n+1}-1}{q^{k+1}-1}{n\brack k}_q.
\end{eqnarray}

In \cite{LiShiLing}, the authors simplified the formula for $A_{n,k,\ell,q}$ separately for even and odd $q$.

\begin{theorem}\cite[Theorem 3.4]{LiShiLing} \label{theorem.Li et al.even} Assume that $q$ is even. Let $\ell$, $k$ and $n$ be three positive integers such that $\ell\le k\le n-\ell$. Suppose that $k_0=k-\ell$. Then
\begin{equation}\label{evenq}
A_{n,k,\ell,q}=\left\{\begin{array}{ll}
\left(\prod_{i=1}^\ell\frac{q^{n-k_0-i}-q^{i-1}}{(q^i-1)q^{k_0+i-1}}\right)q^{(nk_0-k_0^2+n-1)/2}{n/2-1\brack(k_0-1)/2 }_{q^2}&\mbox{if} \ n \ \mbox{is even}, k_0 \ \mbox{is odd},\\
\left(\prod_{i=1}^{\ell-1}\frac{q^{n-k_0-i}-q^{i}}{(q^i-1)q^{k_0+i-1}}\right)q^{(n-k_0)(k_0-1)/2+n-k}\frac{q^{n-k_0}-1}{{(q^\ell}-1)q^{k-1}}{(n-1)/2\brack(k_0-1)/2 }_{q^2}&\mbox{if} \ n \ \mbox{is odd}, k_0 \ \mbox{is odd},\\
\left(\prod_{i=1}^\ell\frac{q^{n-k_0-i}-q^{i-1}}{(q^i-1)q^{k_0+i-1}}\right)q^{k_0(n-k_0+1)/2}{(n-1)/2\brack k_0/2 }_{q^2}&\mbox{if} \ n \ \mbox{is odd}, k_0 \ \mbox{is even},\\
\left(\prod_{i=1}^{\ell-1}\frac{q^{n-k_0-i}-q^{i}}{(q^i-1)q^{k_0+i-1}}\right)q^{k_0(n-k_0)/2}\frac{q^{n-\ell}-1}{{(q^\ell}-1)q^{k-1}}{n/2-1\brack k_0/2 }_{q^2}&\mbox{if} \ n \ \mbox{is even}, k_0 \ \mbox{is even},\\
\end{array}\right.
\end{equation}
\end{theorem}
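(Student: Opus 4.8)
The plan is to establish \eqref{evenq} by inserting the even-$q$ values of $\sigma_{n,i}$ from Theorem \ref{thm:sigma} into Sendrier's alternating sum (Theorem \ref{theorem.Sendrier}) and collapsing it to a single product. First I would reindex Sendrier's formula by setting $j=i-\ell$ and $k_0=k-\ell$, which gives
\[
A_{n,k,\ell,q}=\sum_{j=0}^{k_0}(-1)^j q^{\binom{j}{2}}{n-2\ell-2j\brack k_0-j}_q{\ell+j\brack j}_q\,\sigma_{n,\ell+j}.
\]
Since we are in the even-$q$ regime, Theorem \ref{thm:sigma} offers only two shapes for the self-orthogonal count: $\sigma_{n,i}=\pi_{n,i}$ when $n$ is odd and $\sigma_{n,i}=\frac{q^{n-i}-1}{q^n-1}\pi_{n,i}$ when $n$ is even, where $\pi_{n,i}=\prod_{t=1}^i\frac{q^{2m-2t+2}-1}{q^t-1}$ with $m=\lfloor n/2\rfloor$. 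Writing the numerator of $\pi_{n,i}$ as $\prod_{s=m-i+1}^{m}(q^{2s}-1)$ exhibits $\sigma_{n,\ell+j}$ as a ratio of a base-$q^2$ Pochhammer symbol and a base-$q$ factorial; consequently the whole summand is a hypergeometric term in $j$, in the sense that the ratio of consecutive terms is a rational function of $q^j$.

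Second, I would factor out of the sum everything independent of $j$ and normalise the surviving alternating sum into the canonical $q$-Pochhammer form $(a;q)_m$. The sign-weight $(-1)^j q^{\binom{j}{2}}$ is the signature of a terminating basic hypergeometric summation: once the $j$-independent factors are stripped, I expect the remainder to be summable in closed form by a $q$-analogue of the Chu--Vandermonde convolution (equivalently, after reduction to base $q^2$, by the Cauchy/Gauss $q$-binomial theorem $\sum_j(-1)^j q^{\binom{j}{2}}{m\brack j}_q x^j=\prod_{t=0}^{m-1}(1-xq^t)$). Applying the appropriate identity replaces the sum by one product, which is exactly the gain of the multiplication formula over Sendrier's sum formula.

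Third, I would carry out the parity bookkeeping that splits the single identity into the four branches of \eqref{evenq}. The base-$q^2$ factor $\prod_{s=m-i+1}^{m}(q^{2s}-1)$ forces the reduction to base $q^2$, and in that base the surviving Gaussian binomial is ${\lfloor (n-1)/2\rfloor\brack\lfloor k_0/2\rfloor}_{q^2}$ uniformly; its top and bottom indices take the explicit values $n/2-1$ or $(n-1)/2$ and $k_0/2$ or $(k_0-1)/2$ precisely according to the parities of $n$ and $k_0$, which is why \eqref{evenq} has four cases rather than the two coming from the parity of $n$ alone. The leftover power of $q$ and the boundary factor $\frac{q^{n-i}-1}{q^n-1}$ (present only for even $n$) must then be folded into the prefactors: in the two cases where $n$ and $k_0$ share parity one obtains the $\prod_{i=1}^{\ell-1}$ product together with an isolated factor $\frac{q^{n-k_0}-1}{(q^\ell-1)q^{k-1}}$ or $\frac{q^{n-\ell}-1}{(q^\ell-1)q^{k-1}}$, while in the mismatched cases one obtains the full $\prod_{i=1}^{\ell}$ product.

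The hard part will be the second step: matching the Sendrier sum to a named $q$-summation theorem. The obstruction is that $\sigma_{n,i}$ mixes base-$q$ and base-$q^2$ Pochhammer factors, so the summand is not a clean ${}_2\phi_1$ in the base $q$, and one must first separate the even and odd indices to pass to base $q^2$ --- exactly the manoeuvre that introduces the dependence on the parity of $k_0$. Verifying that the residual $q$-powers and the even-$n$ boundary factor recombine into precisely the prefactors displayed in \eqref{evenq}, uniformly across all four branches, is the main bookkeeping burden; I would pin down the normalisation on the small cases $\ell=0$ and $\ell=1$ before attempting the general telescoping, and I would keep in reserve an alternative direct combinatorial derivation --- counting each $[n,k]$ code with an $\ell$-dimensional hull as a self-orthogonal $\ell$-space extended by a complementary LCD part --- as an independent check on the product form.
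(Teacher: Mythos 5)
You should know at the outset that the paper contains no proof of this statement to compare against: Theorem~\ref{theorem.Li et al.even} is imported verbatim from \cite[Theorem 3.4]{LiShiLing} and used as a black box, so your argument has to stand entirely on its own. On its own terms it has a genuine gap at its center. All of the content of the theorem lies in collapsing the alternating sum of Theorem~\ref{theorem.Sendrier} into a single product, and that is exactly the step you do not carry out: you only say you ``expect'' the normalized sum to yield to a $q$-Chu--Vandermonde or Cauchy $q$-binomial identity. That expectation is unsubstantiated, and the obstruction you yourself flag is precisely why it is not routine. After inserting Theorem~\ref{thm:sigma} into the reindexed sum, the $j$-th term carries the base-$q$ Gaussian binomials, the weight $(-1)^jq^{\binom{j}{2}}$, the base-$q^2$ factors $\prod_{s}(q^{2s}-1)$ hidden in $\pi_{n,\ell+j}$, and, when $n$ is even, the further $j$-dependent factor $(q^{n-\ell-j}-1)/(q^{n}-1)$. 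Splitting the index $j$ by parity to pass to base $q^2$ turns $q^{\binom{j}{2}}$ into genuinely quadratic exponents (e.g. $\binom{2s}{2}=2s^2-s$), and these do not reassemble into the base-$q^2$ Cauchy product $\prod_t(1-xq^{2t})$; so no off-the-shelf ${}_2\phi_1$-type summation applies, and what would actually be needed is a quadratic-base summation or transformation, or a $q$-Zeilberger certificate, separately in each of the four parity branches. You supply neither, and checking $\ell=0,1$ cannot replace them. Until that identity is established, your third step (the parity bookkeeping of the prefactors) has nothing to act on.

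It is also worth noting that the derivation you keep ``in reserve'' is the one that actually delivers the theorem, and in substance it is how the source \cite{LiShiLing} proceeds: a code $\mathcal{C}$ with $\dim\hull(\mathcal{C})=\ell$ is assembled from its hull (a self-orthogonal $\ell$-dimensional code, counted by Theorem~\ref{thm:sigma}) and a complementary $k_0$-dimensional part with trivial hull in the induced geometry, and the number of such completions is independent of which self-orthogonal space was chosen; this is exactly what produces a multiplication formula rather than a sum formula. The footprint of that argument is visible in the statement itself: in each branch of \eqref{evenq} the factor multiplying the product over $i$ is the LCD count \eqref{evenq0} evaluated at dimension $k_0=k-\ell$ --- exactly in the two branches where $n$ and $k_0$ have opposite parity, and up to elementary factors in the other two. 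The present paper's own remark that it could not extract its main inequality from the sum formula of \cite{Sendrier_hull}, while the product formulae of \cite{LiShiLing} make it transparent, points the same way: the decomposition argument, not the $q$-series manipulation, should be your primary route, with the hypergeometric reduction at best an independent consistency check.
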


As in the above formulae the product goes from 1 to $\ell$ or $\ell-1$, the formulae for $\ell=0$ are different and they follow from Theorem 3.2 in \cite{LiShiLing}.

\begin{theorem}
Assume that $q$ is even. Let $k$ and $n$ be positive integers such that $k< n$. Then
\begin{equation}\label{evenq0}
A_{n,k,0,q}=\left\{\begin{array}{ll}
q^{(nk-k^2+n-1)/2}{n/2-1\brack(k-1)/2 }_{q^2}&\mbox{if} \ n \ \mbox{is even}, k \ \mbox{is odd},\\
q^{(n-k)(k-1)/2+n-k}{(n-1)/2\brack(k-1)/2 }_{q^2}&\mbox{if} \ n \ \mbox{is odd}, k \ \mbox{is odd},\\
q^{k(n-k+1)/2}{(n-1)/2\brack k/2 }_{q^2}&\mbox{if} \ n \ \mbox{is odd}, k \ \mbox{is even},\\
q^{k(n-k)/2}\frac{q^{n}-1}{q^{n-k}-1}{n/2-1\brack k/2 }_{q^2}&\mbox{if} \ n \ \mbox{is even}, k \ \mbox{is even},\\
\end{array}\right.
\end{equation}
\end{theorem}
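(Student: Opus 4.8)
The plan is to read off the four identities as the $\ell=0$ case of the even-$q$ enumeration in \cite{LiShiLing}. The first point to note is that one cannot simply substitute $\ell=0$ into Theorem~\ref{theorem.Li et al.even}: that theorem is stated for \emph{positive} $\ell$, and in two of its four branches the expression carries a factor of the form $\frac{q^{n-k_0}-1}{(q^\ell-1)q^{k-1}}$ (respectively $\frac{q^{n-\ell}-1}{(q^\ell-1)q^{k-1}}$) whose denominator $q^\ell-1$ vanishes at $\ell=0$. This degeneracy is exactly why the present theorem has to be stated separately, and why I would base its proof on Theorem~3.2 of \cite{LiShiLing}, which counts the Euclidean LCD codes directly rather than through the general hull-dimension formula.

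For the two branches in which the general formula contains only the product $\prod_{i=1}^{\ell}$, namely $n$ even with $k$ odd and $n$ odd with $k$ even, the reasoning is transparent: setting $\ell=0$ turns this into the empty product $1$ and forces $k_0=k$, and the surviving factors of Theorem~\ref{theorem.Li et al.even} reduce to $q^{(nk-k^2+n-1)/2}{n/2-1\brack(k-1)/2}_{q^2}$ and $q^{k(n-k+1)/2}{(n-1)/2\brack k/2}_{q^2}$, i.e.\ to the first and third lines of \eqref{evenq0}. I would therefore check that the $\ell=0$ specialization of Theorem~3.2 of \cite{LiShiLing} indeed reproduces these two expressions, which serves as a consistency verification.

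For the remaining two branches, $n$ odd with $k$ odd and $n$ even with $k$ even, direct substitution is blocked by the singular factor above, so here I would quote the LCD count of Theorem~3.2 of \cite{LiShiLing} and specialize it to even $q$ and to the relevant parity of $n$, producing the second and fourth lines of \eqref{evenq0}; the factor $\frac{q^{n}-1}{q^{n-k}-1}$ appearing in the last line is characteristic of the even-$n$, even-$q$ regime, cf.\ the form of $\sigma_{n,k}$ in Theorem~\ref{thm:sigma}.

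If instead a self-contained derivation not relying on \cite{LiShiLing} is wanted, I would start from Sendrier's formula (Theorem~\ref{theorem.Sendrier}) at $\ell=0$. Since ${i\brack 0}_q=1$, it collapses to
\[
A_{n,k,0,q}=\sum_{i=0}^{k}(-1)^{i}q^{\binom{i}{2}}{n-2i\brack k-i}_q\,\sigma_{n,i},
\]
into which one substitutes the even-$q$ values $\sigma_{n,i}=\pi_{n,i}$ for odd $n$ and $\sigma_{n,i}=\frac{q^{n-i}-1}{q^{n}-1}\pi_{n,i}$ for even $n$ from Theorem~\ref{thm:sigma}. The main obstacle is then the closed-form evaluation of this alternating $q$-sum: it requires $q$-binomial identities (a $q$-analogue of the binomial theorem together with a telescoping argument) and careful bookkeeping of the parities of $n-2i$ and $k-i$, and it is precisely this parity bookkeeping that forces the split into the four cases of \eqref{evenq0}.
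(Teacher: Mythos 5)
Your proposal matches the paper's own treatment: the paper proves this theorem simply by observing that the general formulae of Theorem~\ref{theorem.Li et al.even} degenerate at $\ell=0$ (the products run to $\ell$ or $\ell-1$, and two branches carry a factor $q^\ell-1$ in the denominator) and by citing Theorem~3.2 of \cite{LiShiLing} for the $\ell=0$ count, which is exactly your primary route. Your consistency check of the two non-singular branches against the $\ell=0$ specialization, and your sketched alternative derivation from Sendrier's formula, go beyond what the paper records, but the core argument is the same citation.
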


The formulae \eqref{evenq} and \eqref{evenq0} can be combined and simplified further. In the formulae \eqref{evenq} the product goes from 1 to $\ell$ or $\ell-1$. We have transformed these formulae a little and combined with \eqref{evenq0} so that the product in all of them goes from 0 to $\ell$. Note that in the case $\ell=0$ we assume that the product is equal to 1.

\begin{corollary}\label{cor-evenq}
Assume that $q$ is even. Let $\ell$, $k$ and $n$ be integers, such that $0\le\ell\le k< n-\ell$. Suppose that $k_0=k-\ell$. Then
\[
A_{n,k,\ell,q}=\left\{\begin{array}{ll}
\left(\prod_{i=1}^\ell\frac{q^{n-k_0-i}-q^{i-1}}{(q^i-1)q^{k_0+i-1}}\right)q^{(nk_0-k_0^2+n-1)/2}{n/2-1\brack(k_0-1)/2 }_{q^2}&\mbox{if} \ n \ \mbox{is even}, k_0 \ \mbox{is odd},\\
\left(\prod_{i=1}^{\ell}\frac{q^{n-k_0-i}-q^{i}}{(q^i-1)q^{k_0+i-1}}\right)\frac{q^{n-k_0}-1}{q^\ell(q^{n-k-\ell}-1)}q^{(n-k_0)(k_0-1)/2+n-k}{(n-1)/2\brack(k_0-1)/2 }_{q^2}&\mbox{if} \ n \ \mbox{is odd}, k_0 \ \mbox{is odd},\\
\left(\prod_{i=1}^\ell\frac{q^{n-k_0-i}-q^{i-1}}{(q^i-1)q^{k_0+i-1}}\right)q^{k_0(n-k_0+1)/2}{(n-1)/2\brack k_0/2 }_{q^2}&\mbox{if} \ n \ \mbox{is odd}, k_0 \ \mbox{is even},\\
\left(\prod_{i=1}^{\ell}\frac{q^{n-k_0-i}-q^{i}}{(q^i-1)q^{k_0+i-1}}\right)\frac{q^{n-\ell}-1}{q^{\ell}(q^{n-k-\ell}-1)}q^{k_0(n-k_0)/2}{n/2-1\brack k_0/2 }_{q^2}&\mbox{if} \ n \ \mbox{is even}, k_0 \ \mbox{is even}.
\end{array}\right.
\]
\end{corollary}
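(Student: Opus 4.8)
The plan is to verify the Corollary as a purely algebraic rewriting of Theorem~\ref{theorem.Li et al.even} together with the $\ell=0$ formulae in \eqref{evenq0}. First I would observe that the first and third cases of the Corollary ($n$ even with $k_0$ odd, and $n$ odd with $k_0$ even) are already stated in \eqref{evenq} with the product running from $1$ to $\ell$; so for these two branches nothing changes when $\ell\ge 1$, and it only remains to check that they specialize correctly at $\ell=0$. The substance of the proof is therefore confined to the second and fourth cases ($n$ odd with $k_0$ odd, and $n$ even with $k_0$ even), where the product in \eqref{evenq} runs only up to $\ell-1$ and I must show that extending it to $\ell$ while replacing the prefactor yields the same quantity.

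For these two cases the key step is to isolate the $i=\ell$ factor of the extended product. Using $k_0=k-\ell$, hence $k_0+\ell-1=k-1$ and $n-k_0-\ell=n-k$, this factor equals
\[
\frac{q^{n-k_0-\ell}-q^{\ell}}{(q^{\ell}-1)q^{k_0+\ell-1}}=\frac{q^{n-k}-q^{\ell}}{(q^{\ell}-1)q^{k-1}}=\frac{q^{\ell}\bigl(q^{n-k-\ell}-1\bigr)}{(q^{\ell}-1)q^{k-1}}.
\]
Multiplying this factor by the new prefactor $\frac{q^{n-k_0}-1}{q^{\ell}(q^{n-k-\ell}-1)}$ of the second case (respectively $\frac{q^{n-\ell}-1}{q^{\ell}(q^{n-k-\ell}-1)}$ of the fourth case) cancels the term $q^{\ell}\bigl(q^{n-k-\ell}-1\bigr)$ and reproduces exactly the old prefactor $\frac{q^{n-k_0}-1}{(q^{\ell}-1)q^{k-1}}$ (respectively $\frac{q^{n-\ell}-1}{(q^{\ell}-1)q^{k-1}}$) appearing in \eqref{evenq}. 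Thus for $\ell\ge 1$ the Corollary agrees with Theorem~\ref{theorem.Li et al.even}. Here one should note that the hypothesis $k<n-\ell$ gives $n-k-\ell\ge 1$, so $q^{n-k-\ell}-1\neq 0$ and all the divisions above are legitimate.

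It then remains to confirm the $\ell=0$ case in all four branches. Setting $\ell=0$ forces $k_0=k$ and makes the product empty, hence equal to $1$ by convention. In the first and third cases the surviving factors reduce immediately to the corresponding lines of \eqref{evenq0}. In the second and fourth cases the compensating prefactor specializes to $\frac{q^{n-k}-1}{q^{n-k}-1}=1$ and to $\frac{q^{n}-1}{q^{n-k}-1}$, respectively, which are precisely the factors appearing in the ``$n$ odd, $k$ odd'' and ``$n$ even, $k$ even'' lines of \eqref{evenq0}. This closes all four cases.

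I do not expect any genuine obstacle here: the argument is essentially bookkeeping, and the only points requiring care are the consistent use of the identities $k_0+\ell=k$ and $n-k_0-\ell=n-k$ when rewriting the exponents, and the verification that the newly introduced denominators $q^{n-k-\ell}-1$ never vanish, which is guaranteed by the standing hypothesis $0\le\ell\le k<n-\ell$.
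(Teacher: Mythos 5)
Your proposal is correct and follows essentially the same route as the paper, which obtains the corollary precisely by extending the products in \eqref{evenq} to include the $i=\ell$ factor, absorbing it into the modified prefactor, and adopting the empty-product convention so that the $\ell=0$ case reduces to \eqref{evenq0}. Your explicit verification that the $i=\ell$ factor equals $q^{\ell}(q^{n-k-\ell}-1)/((q^{\ell}-1)q^{k-1})$, together with the observation that the hypothesis $k<n-\ell$ keeps the new denominator nonzero (the point made in the paper's remark following the corollary), supplies exactly the bookkeeping the paper leaves implicit.
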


\begin{remark} If $k=n-\ell$ which means that $\ell=n-k$ and the code $\mathcal{C}^\perp$ is self-orthogonal, the formulae in Corollary \ref{cor-evenq} are not valid because they contain 0 as a factor in the denominator.
\end{remark}

The formulae for odd characteristic depend on the Legendre character $\eta$ of $\F_q$. Recall that $\eta(c)=1$ if $c$ is a square in $\F_q^*$, and $\eta(c)=-1$ otherwise ($c\neq 0$).

\begin{theorem}\cite[Theorem 4.12]{LiShiLing} \label{theorem.Li et al.odd} Assume that $q$ is odd. Let $\ell$, $k$ ana $n$ be three positive integers such that $\ell\le k\le n-\ell$. Suppose that $k_0=k-\ell$. Then
\begin{equation}\label{oddq}
A_{n,k,\ell,q}=\left\{\begin{array}{ll}
\left(\prod_{i=1}^\ell\frac{q^{n-k_0-2i+1}-1}{q^{k_0}(q^i-1)}\right)q^{(k_0(n-k_0)-1)/2}B_1{n/2-1\brack(k_0-1)/2 }_{q^2}&\mbox{if} \ n \ \mbox{is even}, k_0 \ \mbox{is odd},\\
\left(\prod_{i=1}^{\ell}\frac{q^{n-k_0-2i+2}-1}{q^{k_0}(q^i-1)}\right)q^{(n-k_0)(k_0+1)/2-\ell}{(n-1)/2\brack(k_0-1)/2 }_{q^2}&\mbox{if} \ n \ \mbox{is odd}, k_0 \ \mbox{is odd},\\
\left(\prod_{i=1}^\ell\frac{q^{n-k_0-2i+1}-1}{q^{k_0}(q^i-1)}\right)q^{k_0(n-k_0+1)/2}{(n-1)/2\brack k_0/2 }_{q^2}&\mbox{if} \ n \ \mbox{is odd}, k_0 \ \mbox{is even},\\
\left(\prod_{i=1}^{\ell}\frac{q^{n-k_0-2i+2}-1}{q^{k_0}(q^i-1)}\right)q^{k_0(n-k_0)/2}B_2{n/2\brack k_0/2 }_{q^2}&\mbox{if} \ n \ \mbox{is even}, k_0 \ \mbox{is even},\\
\end{array}\right.
\end{equation}
where $B_1=q^{n/2}-\eta((-1)^{n/2})$ and $B_2=\frac{q^{n/2-\ell}+\eta((-1)^{n/2})}{q^{n/2}+\eta((-1)^{n/2})}$.
\end{theorem}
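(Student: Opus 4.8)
The plan is to reduce the computation of $A_{n,k,\ell,q}$ to two independent enumeration problems — counting the admissible hulls, and counting the codes having a \emph{fixed} hull — and then to reconcile the resulting product with the closed form using the formula for $\sigma_{n,\ell}$ in Theorem~\ref{thm:sigma} together with Gaussian-binomial identities. First I would set up the hull fibration. The Euclidean isometry group $O_n(\F_q)$ acts on the set of $[n,k]_q$ codes, commutes with the operation $\mathcal{C}\mapsto\mathcal{C}^\perp$, and therefore satisfies $\hull(g\mathcal{C})=g\,\hull(\mathcal{C})$ for every $g$. By Witt's theorem this group acts transitively on the self-orthogonal subspaces of a fixed dimension $\ell$, so the number of codes with a prescribed hull $H$ depends only on $\dim H=\ell$. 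Consequently
\[
A_{n,k,\ell,q}=\sigma_{n,\ell}\cdot N_\ell,
\]
where $\sigma_{n,\ell}$ is given by Theorem~\ref{thm:sigma} and $N_\ell=\bigl|\{\mathcal{C}:\hull(\mathcal{C})=H\}\bigr|$ for one fixed self-orthogonal $H$.

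Second, I would identify $N_\ell$ with an LCD count in a smaller space. If $\hull(\mathcal{C})=H$ then $H\subseteq\mathcal{C}$, and dualizing $H\subseteq\mathcal{C}^\perp$ gives $\mathcal{C}\subseteq H^\perp$; thus $H\subseteq\mathcal{C}\subseteq H^\perp$. Since $H$ is self-orthogonal, the quotient $\overline V=H^\perp/H$ carries a nondegenerate symmetric bilinear form of dimension $n-2\ell$, and passing to the quotient sends $\mathcal{C}$ to a subspace $\overline{\mathcal{C}}=\mathcal{C}/H$ of dimension $k_0=k-\ell$ with $\overline{\mathcal{C}}^{\perp}=\mathcal{C}^\perp/H$. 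The condition $\hull(\mathcal{C})=H$ then becomes exactly the condition that $\overline{\mathcal{C}}$ have trivial hull in $\overline V$, i.e.\ that it be an LCD code there. Hence $N_\ell$ is the number of LCD subspaces of dimension $k_0$ in the nondegenerate space $\overline V$.

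Third, I would pin down the isometry type of $\overline V$ and count its LCD subspaces. For $q$ odd a nondegenerate symmetric form is classified by its dimension and discriminant; since $V\cong\overline V\perp\mathbb{H}^{\ell}$ with each hyperbolic plane $\mathbb{H}$ of discriminant $-1$, the discriminant of $\overline V$ picks up a factor $(-1)^{\ell}$ relative to the standard form. This is precisely the mechanism that produces the Legendre-character quantities $B_1$ and $B_2$ (through $\eta((-1)^{n/2})$) in the four parity cases of $n$ and $k_0$. To count the nondegenerate $k_0$-dimensional subspaces of $\overline V$ I would argue recursively, building such a subspace one anisotropic vector at a time and descending to its orthogonal complement; this yields a $q^2$-Gaussian binomial of the shape ${(n-2\ell)/2-1\brack(k_0-1)/2}_{q^2}$ together with the appropriate power of $q$ and the $\pm1$ correction coming from the form type.

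Finally, I would substitute $\sigma_{n,\ell}$ from Theorem~\ref{thm:sigma} and simplify. The factor $\pi_{n,\ell}$ inside $\sigma_{n,\ell}$ is itself a product of the same shape as a $q^2$-Gaussian binomial, so multiplying it against the LCD-count binomial telescopes it up to ${n/2-1\brack(k_0-1)/2}_{q^2}$ and leaves exactly the explicit product $\prod_{i=1}^{\ell}\frac{q^{n-k_0-2i+1}-1}{q^{k_0}(q^i-1)}$ of the statement, while the $q$-powers and the $B_1,B_2$ factors recombine in the same way. I expect the main obstacle to be bookkeeping rather than conceptual: one must handle the four parity cases of $n$ and $k_0$ separately, track the discriminant of $\overline V$ (and hence the sign inside $\eta$) with care, and verify that the two Gaussian-binomial products recombine exactly. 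A convenient alternative that sidesteps much of the algebra is to prove a one-step recursion of the form $A_{n,k,\ell,q}=(\text{explicit ratio})\cdot A_{n-2,k-1,\ell-1,q}$ by peeling a single hyperbolic pair off the hull; this makes the product form manifest and reduces everything to the base case $\ell=0$, namely counting LCD codes.
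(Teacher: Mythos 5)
You should note at the outset that the paper contains no proof of this statement to compare against: it is imported verbatim from the cited work of Li, Shi and Ling (\cite{LiShiLing}, Theorem 4.12) and used as a black box in Section \ref{sec:proofs}. So your proposal can only be judged on its own merits, and on those merits the outline is sound --- indeed it is the natural route to such mass formulas. The fibration step is correct: the hull is $O_n(\F_q)$-equivariant, Witt's extension theorem gives transitivity on totally isotropic $\ell$-dimensional subspaces (for odd $q$, self-orthogonal means totally isotropic), so $A_{n,k,\ell,q}=\sigma_{n,\ell}\,N_\ell$; and the correspondence $\mathcal{C}\mapsto\mathcal{C}/H$ between codes with $\hull(\mathcal{C})=H$ and nondegenerate (i.e.\ LCD) $k_0$-dimensional subspaces of $H^\perp/H$ is a genuine bijection in both directions. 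Your discriminant bookkeeping is also the right mechanism, and it is worth making its conclusion explicit: since a hyperbolic plane has discriminant $-1$, one gets $\mathrm{disc}(\overline V)=(-1)^\ell$ modulo squares, and for $n$ even this forces $\overline V$ to have the \emph{same} Witt type as the standard space $\F_q^n$; this is exactly why $\eta((-1)^{n/2})$, rather than $\eta((-1)^{(n-2\ell)/2})$, appears in $B_1$ and $B_2$.

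Two caveats. First, the count $N_\ell$ you need is the number of nondegenerate $k_0$-subspaces of an \emph{arbitrary} nondegenerate space of prescribed dimension and type, not of the standard Euclidean space; you cannot simply invoke the $\ell=0$ case of the theorem or \cite{CarletLCD2019}, because when $q\equiv 3\pmod 4$ and $\ell$ is odd the standard form on $\F_q^{\,n-2\ell}$ has the opposite type to $\overline V$. The type-dependent count must be derived separately, e.g.\ by orbit--stabilizer using orthogonal group orders, which is in effect what your anisotropic-vector recursion does. Second, and for the same reason, your ``convenient alternative'' recursion $A_{n,k,\ell,q}=(\text{ratio})\cdot A_{n-2,k-1,\ell-1,q}$ is false as literally stated when $n$ is even and $q\equiv 3\pmod 4$: the space $v^\perp/\langle v\rangle$ has the same type as $\F_q^n$, whereas $A_{n-2,k-1,\ell-1,q}$ counts codes in the standard space $\F_q^{\,n-2}$, whose type differs by the factor $\eta(-1)$. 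That recursion must therefore be run on a pair of counting functions, one for each type of ambient space. With these repairs the plan goes through, and what remains is the four-parity-case Gaussian-binomial algebra that you already anticipate.
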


The following formulae apply to LCD codes:

\begin{theorem} \cite[Corollary 32]{CarletLCD2019} Let $q$ be a power of an odd prime and $k$, $n$
be two positive integers with $k < n$. Then
\[
A_{n,k,0,q}=\left\{\begin{array}{ll}
q^{(k(n-k)-1)/2}(q^{n/2}-\eta((-1)^{n/2})){n/2-1\brack(k-1)/2 }_{q^2}&\mbox{if} \ n \ \mbox{is even}, k \ \mbox{is odd},\\
q^{(k+1)(n-k)/2}{(n-1)/2\brack(k-1)/2 }_{q^2}&\mbox{if} \ n \ \mbox{is odd}, k \ \mbox{is odd},\\
q^{k(n-k+1)/2}{(n-1)/2\brack k/2 }_{q^2}&\mbox{if} \ n \ \mbox{is odd}, k_0 \ \mbox{is even},\\
q^{k(n-k)/2}{n/2\brack k/2 }_{q^2}&\mbox{if} \ n \ \mbox{is even}, k_0 \ \mbox{is even},\\
\end{array}\right.
\]
\end{theorem}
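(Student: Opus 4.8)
The plan is to obtain this formula as the $\ell = 0$ specialization of the general count in Theorem \ref{theorem.Li et al.odd}. Since a code has trivial hull exactly when $\dim \hull(\mathcal{C}) = 0$, the quantity $A_{n,k,0,q}$ is precisely the number of $q$-ary LCD codes of length $n$ and dimension $k$, i.e. the $\ell = 0$ entry of the sequence $\mathcal{S}_{n,k,q}$. First I would set $\ell = 0$ (hence $k_0 = k$) throughout the four cases of \eqref{oddq}. Each product $\prod_{i=1}^{\ell}$ then becomes an empty product, equal to $1$; the auxiliary factor $B_1 = q^{n/2} - \eta((-1)^{n/2})$ does not involve $\ell$ and is unchanged, matching the parenthesized factor in the first line; and $B_2 = \frac{q^{n/2-\ell} + \eta((-1)^{n/2})}{q^{n/2} + \eta((-1)^{n/2})}$ collapses to $1$, which is why no such factor appears in the fourth line. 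The exponents of $q$ then simplify directly (for instance $(n-k_0)(k_0+1)/2 - \ell$ becomes $(k+1)(n-k)/2$ in the odd-$n$, odd-$k$ case, and $(k_0(n-k_0)-1)/2$ becomes $(k(n-k)-1)/2$), and every Gaussian coefficient turns into the one in the statement because $k_0 = k$. Matching the four parity cases yields exactly the four lines of the theorem.

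The one genuine gap is that Theorem \ref{theorem.Li et al.odd} is stated only for positive integers $\ell$, so it does not formally cover $\ell = 0$, and this must be justified separately (just as the even case required the separate formula \eqref{evenq0} alongside Theorem \ref{theorem.Li et al.even}). I would close the gap in one of two ways. The direct route is to revisit the derivation in \cite{LiShiLing} and verify that no step actually requires $\ell \ge 1$: the products over $1 \le i \le \ell$ and the factor $B_2$ are precisely the terms that degenerate harmlessly at $\ell = 0$, so the same recursion should persist. The self-contained route, which I would prefer for a clean write-up, is to derive the $\ell = 0$ formula independently from Sendrier's mass formula: setting $\ell = 0$ in Theorem \ref{theorem.Sendrier} gives $A_{n,k,0,q} = \sum_{i=0}^{k} {n-2i \brack k-i}_q (-1)^{i} q^{\binom{i}{2}} \sigma_{n,i}$, after which one substitutes the explicit values of $\sigma_{n,i}$ from Theorem \ref{thm:sigma} and reduces the claim, in each parity case, to a finite $q$-binomial identity.

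I expect the main obstacle to lie in this self-contained route, namely evaluating the alternating weighted sum $\sum_{i=0}^{k} {n-2i \brack k-i}_q (-1)^{i} q^{\binom{i}{2}} \sigma_{n,i}$ in closed form. The factor $\sigma_{n,i}$ itself splits into the four cases of Theorem \ref{thm:sigma}, and combined with the alternating sign and the $q^{\binom{i}{2}}$ weight this becomes a delicate $q$-series manipulation rather than a routine telescoping cancellation — indeed it is essentially the computation that \cite{LiShiLing} carried out to pass from the sum formula to their product formula. For the purposes of this paper the specialization argument of the first paragraph is shorter and sufficient, with \cite[Corollary 32]{CarletLCD2019} guaranteeing validity at $\ell = 0$; I would record the formula in this form chiefly so that it can be merged with the four cases of Theorem \ref{theorem.Li et al.odd} into a single odd-$q$ statement, exactly as \eqref{evenq0} was absorbed into Corollary \ref{cor-evenq} in the even case.
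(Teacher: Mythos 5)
Your proposal matches the paper's treatment of this statement: the paper does not prove it but imports it from \cite[Corollary 32]{CarletLCD2019}, and its immediately following Remark makes precisely your observation --- that the formulae \eqref{oddq} of Theorem \ref{theorem.Li et al.odd} specialize correctly at $\ell=0$ (empty products, $B_2=1$, $B_1$ unchanged), which is the consistency check you carry out correctly in all four parity cases. Your flagging of the formal $\ell\ge 1$ hypothesis is a fair point that the paper glosses over with ``it is easy to see,'' but since the result is cited rather than proved, no further argument is needed beyond what you give.
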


\begin{remark} It is easy to see that the formulae \eqref{oddq} also apply to $\ell=0$.
\end{remark}

\section{Comparing the numbers of linear codes with different hull dimensions}
\label{sec:proofs}

In this section, we evaluate the ratio $A_{n,k,\ell,q}/A_{n,k,\ell+1,q}$ for fixed values of $q$, $n$ and $k$, such that $\ell+1\le k\le n/2$.
According to \cite{LiShiLing}, we consider four cases for odd $q$ and four cases for even $q$.

\subsection{Odd $q$}

We consider four cases according to \eqref{oddq}.

\begin{itemize}
\item Let $n$ be even, $k-\ell$ be odd. In this case we must have in mind that if $q\equiv 3\pmod 4$, $n\equiv 2\pmod 4$ and $\ell=k=n/2$, then $A_{n,n/2,n/2,q}=\sigma_{n,n/2}=0$.
Therefore, for such values of $n$ and $q$, we take $\ell+1<k$ if $k=n/2$.
Now $k_0'=k-\ell-1=k_0-1$ is even and
\begin{align*}
  \frac{A_{n,k,\ell,q}}{A_{n,k,\ell+1,q}} & =\left(\prod_{i=1}^\ell\frac{q^{n-k_0-2i+1}-1}{q^{k_0}(q^i-1)}\right)q^{(k_0(n-k_0)-1)/2}B_1{n/2-1\brack(k_0-1)/2 }_{q^2}\\ &/\left(\prod_{i=1}^{\ell+1}\frac{q^{n-k_0+1-2i+2}-1}{q^{k_0-1}(q^i-1)}\right)q^{(k_0-1)(n-k_0+1)/2}B_2'{n/2\brack (k_0-1)/2 }_{q^2} \\
   &=\left(\prod_{i=1}^\ell\frac{(q^{n-k_0-2i+1}-1)}{(q^{n-k_0-2i+3}-1)q}\right)\frac{q^{k_0-1}(q^{\ell+1}-1)q^{(k_0(n-k_0)-1)/2}B_1}{(q^{n-k_0-2\ell+1}-1)q^{(k_0-1)(n-k_0+1)/2}B_2'}\\
   &~~~{n/2-1\brack(k_0-1)/2 }_{q^2}/{n/2\brack (k_0-1)/2 }_{q^2} \\
   &=\left(\prod_{i=1}^\ell\frac{q^{n-k_0-2i+1}-1}{q^{n-k_0-2i+3}-1}\right)\frac{q^{n/2-1}(q^{\ell+1}-1)B_1}{q^{\ell}(q^{n-k-\ell+1}-1)B_2'}\times\frac{{n/2-1\brack(k_0-1)/2 }_{q^2}}{{n/2\brack (k_0-1)/2 }_{q^2}}.
\end{align*}

Consider separately the multipliers in the last formula.
$$\prod_{i=1}^\ell\frac{q^{n-k_0-2i+1}-1}{q^{n-k_0-2i+3}-1}=\frac{q^{n-k_0-2\ell+1}-1}{q^{n-k_0+1}-1}=\frac{q^{n-k-\ell+1}-1}{q^{n-k_0+1}-1}$$
$$\frac{{n/2-1\brack(k_0-1)/2 }_{q^2}}{{n/2\brack (k_0-1)/2 }_{q^2}}=\frac{q^{n-k_0+1}-1}{q^n-1}$$
$$\frac{B_2'}{B_1}=\frac{q^{n/2-\ell-1}+\eta((-1)^{n/2})}{(q^{n/2}+\eta((-1)^{n/2}))(q^{n/2}-\eta((-1)^{n/2}))}=\frac{q^{n/2-\ell-1}+\eta((-1)^{n/2})}{q^{n}-1}$$

It follows that
$$\frac{A_{n,k,\ell,q}}{A_{n,k,\ell+1,q}}=\frac{q^{n/2-1}(q^{\ell+1}-1)}{q^{\ell}(q^{n/2-\ell-1}+\eta((-1)^{n/2}))}=\frac{q^{n/2-1}(q^{\ell+1}-1)}{q^{n/2-1}+\eta((-1)^{n/2})q^\ell}.$$

We observe that the considered ratio does not depend on $k$ but only on $l$, $q$, and the length $n$.

If $n\equiv 0\pmod 4$, or $n\equiv 2\pmod 4$ and $q\equiv 1\pmod 4$, then $\eta((-1)^{n/2})=1$. Hence
$$\frac{A_{n,k,\ell,q}}{A_{n,k,\ell+1,q}}= \frac{q^{n/2-1}}{q^{n/2-1}+q^\ell}(q^{\ell+1}-1)=(1-\frac{q^{\ell}}{q^{n/2-1}+q^\ell})(q^{\ell+1}-1)\ge \frac{q^{\ell+1}-1}{2},$$
as the equality holds in the case when $\ell=n/2-1$.

Otherwise, if $n\equiv 2\pmod 4$ and $q\equiv 3\pmod 4$, then $\eta((-1)^{n/2})=-1$, therefore
$$\frac{A_{n,k,\ell,q}}{A_{n,k,\ell+1,q}}= \frac{q^{n/2-\ell-1}}{q^{n/2-\ell-1}-1}(q^{\ell+1}-1)> q^{\ell+1}-1.$$

\item Let $n$ and $k_0$ be odd. Then $k_0'=k_0-1$ is even and so 
\begin{align*}
  \frac{A_{n,k,\ell,q}}{A_{n,k,\ell+1,q}} & =\left(\prod_{i=1}^{\ell}\frac{q^{n-k_0-2i+2}-1}{q^{k_0}(q^i-1)}\right)q^{(n-k_0)(k_0+1)/2-\ell}{(n-1)/2\brack(k_0-1)/2 }_{q^2}\\ &/\left(\prod_{i=1}^{\ell+1}\frac{q^{n-k_0-2i+2}-1}{q^{k_0-1}(q^i-1)}\right)q^{(k_0-1)(n-k_0+2)/2}{(n-1)/2\brack (k_0-1)/2 }_{q^2} \\
  & =\frac{q^{n-k-\ell}(q^{\ell+1}-1)}{q^{n-k-\ell}-1}=q^{\ell+1}-1+\frac{q^{\ell+1}-1}{q^{n-k-\ell}-1}>q^{\ell+1}-1.
\end{align*}

\item Let $n$ be odd and $k-\ell$ be even. Then $k_0'=k_0-1$ is odd and  
\begin{align*}
  \frac{A_{n,k,\ell,q}}{A_{n,k,\ell+1,q}} & =\left(\prod_{i=1}^\ell\frac{q^{n-k_0-2i+1}-1}{q^{k_0}(q^i-1)}\right)q^{k_0(n-k_0+1)/2}{(n-1)/2\brack k_0/2 }_{q^2}\\ &/\left(\prod_{i=1}^{\ell+1}\frac{q^{n-k_0-2i+3}-1}{q^{k_0-1}(q^i-1)}\right)q^{(n-k_0+1)k_0/2-\ell-1}{(n-1)/2\brack(k_0-2)/2 }_{q^2} \\
  &=\left(\prod_{i=1}^\ell\frac{q^{n-k_0-2i+1}-1}{q(q^{n-k_0-2i+3}-1)}\right)\frac{q^{k_0-1}(q^{\ell+1}-1)}{(q^{n-k_0-2\ell+1}-1)}q^{\ell+1}\frac{{(n-1)/2\brack k_0/2 }_{q^2}}{{(n-1)/2\brack k_0/2-1 }_{q^2}}\\
  &=\frac{q^{n-k_0-2\ell+1}-1}{(q^{n-k_0+1}-1)}\frac{q^{k}(q^{\ell+1}-1)}{q^\ell(q^{n-k_0-2\ell+1}-1)}\frac{q^{n-k_0+1}-1}{q^{k_0}-1}\\
  &=\frac{q^{k-\ell}(q^{\ell+1}-1)}{q^{k-\ell}-1}=q^{\ell+1}-1+\frac{q^{\ell+1}-1}{q^{k-\ell}-1}.
\end{align*}

Since $\frac{q^{\ell+1}-1}{q^{k-\ell}-1}>0$, the considered ratio is greater than $q^{\ell+1}-1$. But if $2\ell+1\ge k$, then $\frac{A_{n,k,\ell,q}}{A_{n,k,\ell+1,q}}\ge q^{\ell+1}$. If $k=2\ell+1$, then $A_{n,2\ell+1,\ell,q}/A_{n,2\ell+1,\ell+1,q}=q^{\ell+1}.$

\item Let $n$ and $k-\ell$ be even. Then $k_0'=k_0-1$ is odd and so 
\begin{align*}
  \frac{A_{n,k,\ell,q}}{A_{n,k,\ell+1,q}} & =\left(\prod_{i=1}^{\ell}\frac{q^{n-k_0-2i+2}-1}{q^{k_0}(q^i-1)}\right)q^{k_0(n-k_0)/2}B_2{n/2\brack k_0/2 }_{q^2}\\ &/\left(\prod_{i=1}^{\ell+1}\frac{q^{n-k_0-2i+2}-1}{q^{k_0-1}(q^i-1)}\right)q^{((k_0-1)(n-k_0+1)-1)/2}B_1{n/2-1\brack(k_0-2)/2 }_{q^2} \\
  &=\frac{q^{k_0-1}(q^{\ell+1}-1)q^{n/2-k_0+1}B_2{n/2\brack k_0/2 }_{q^2}}{q^\ell(q^{n-k_0-2\ell}-1)B_1{n/2-1\brack k_0/2-1 }_{q^2}}
\end{align*}

Using that $${n/2\brack k_0/2 }_{q^2}=\frac{q^n-1}{q^{k_0}-1}{n/2-1\brack k_0/2-1 }_{q^2},$$
we obtain
$$\frac{A_{n,k,\ell,q}}{A_{n,k,\ell+1,q}}=\frac{q^{n/2-\ell}(q^{n/2-\ell}+\eta((-1)^{n/2}))}{(q^{n-k-\ell}-1)(q^{k-\ell}-1)}(q^{\ell+1}-1)>q^{\ell+1}-1.$$
\end{itemize}

We summarize the results in the following lemma.

\begin{lemma}\label{lem-oddq} Let $q$ be a power of an odd prime, $n$ and $k$ be positive integers with $k\le n/2$,and $\ell$ be an integer such that $0\le\ell\le k-1$. Then
$$A_{n,k,\ell,q}=\alpha_{n,k,\ell,q}(q^{\ell+1}-1)A_{n,k,\ell+1,q},$$
where
\[
\alpha_{n,k,\ell,q}=\left\{\begin{array}{ll}
\frac{q^{n/2-1}}{q^{n/2-1}+\eta((-1)^{n/2})q^\ell}&\mbox{if} \ n \ \mbox{is even}, k-\ell \ \mbox{is odd},\\
\frac{q^{n-k-\ell}}{q^{n-k-\ell}-1}&\mbox{if} \ n \ \mbox{is odd}, k-\ell \ \mbox{is odd},\\
\frac{q^{k-\ell}}{q^{k-\ell}-1}&\mbox{if} \ n \ \mbox{is odd}, k-\ell \ \mbox{is even},\\
\frac{q^{n/2-\ell}(q^{n/2-\ell}+\eta((-1)^{n/2}))}{(q^{n-k-\ell}-1)(q^{k-\ell}-1)}&\mbox{if} \ n \ \mbox{is even}, k-\ell \ \mbox{is even}.
\end{array}\right.
\]
If $k-\ell$ is odd, $n\equiv 0\pmod 4$ or $n\equiv 2\pmod 4$ and $q\equiv 1\pmod 4$, $\alpha_{n,k,\ell,q}\ge\frac{1}{2}$ with equality when $k=n/2$ and $\ell=k-1$. In all other cases $\alpha_{n,k,\ell,q}>1$.
\end{lemma}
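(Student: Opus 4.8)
The plan is to prove Lemma \ref{lem-oddq} by directly reducing each of the four cases of the ratio $A_{n,k,\ell,q}/A_{n,k,\ell+1,q}$ to the stated closed form $\alpha_{n,k,\ell,q}(q^{\ell+1}-1)$, using the explicit formulae of Theorem \ref{theorem.Li et al.odd}. The key structural observation is that in the ratio of $A_{n,k,\ell,q}$ to $A_{n,k,\ell+1,q}$, the quantity $k_0=k-\ell$ is replaced by $k_0'=k-(\ell+1)=k_0-1$, which flips parity; hence whichever of the four cases applies to the numerator, the \emph{complementary} parity case applies to the denominator. I would set up the four computations according to the parity of $n$ (even/odd) and the parity of $k_0$ (odd/even), since these dictate which row of \eqref{oddq} is used for numerator and denominator.

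First I would handle the two cases where $n$ is odd, since these are the cleanest: the telescoping products of the form $\prod_{i=1}^{\ell}(q^{n-k_0-2i+\text{const}}-1)$ in numerator and denominator cancel against each other, the powers of $q$ combine by adding the exponents, and the Gaussian binomial ratio collapses via the identity ${(n-1)/2\brack k_0/2}_{q^2}/{(n-1)/2\brack k_0/2-1}_{q^2}=(q^{n-k_0+1}-1)/(q^{k_0}-1)$ (the $q^2$-analogue of property (6)). These two cases yield $\alpha=q^{n-k-\ell}/(q^{n-k-\ell}-1)$ and $\alpha=q^{k-\ell}/(q^{k-\ell}-1)$ respectively, both manifestly exceeding $1$, so the claim $\alpha>1$ is immediate. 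The two $n$-even cases are structurally the same but carry the extra factors $B_1=q^{n/2}-\eta((-1)^{n/2})$ and $B_2=(q^{n/2-\ell}+\eta((-1)^{n/2}))/(q^{n/2}+\eta((-1)^{n/2}))$; the main bookkeeping task is to track how $B_1$ and $B_2$ (with their own shifted versions $B_1',B_2'$ when $k_0\mapsto k_0-1$) interact, which is where the $\eta((-1)^{n/2})$ dependence in $\alpha$ originates. I would record the auxiliary simplifications (the telescoped product, the binomial ratio, and the $B_2'/B_1$ ratio) as separate displayed equalities, exactly as the four case-computations in the excerpt already do, and then assemble them.

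The delicate part, and the only place where $\alpha<1$ can occur, is the first case ($n$ even, $k_0$ odd), where $\alpha_{n,k,\ell,q}=q^{n/2-1}/(q^{n/2-1}+\eta((-1)^{n/2})q^\ell)$. Here I would split on the sign of $\eta((-1)^{n/2})$. When $\eta((-1)^{n/2})=1$ (that is, $n\equiv0\pmod 4$, or $n\equiv2\pmod4$ and $q\equiv1\pmod4$) the denominator strictly exceeds the numerator, so $\alpha<1$; writing $\alpha=1-q^{\ell}/(q^{n/2-1}+q^\ell)$ shows $\alpha$ is minimized when $\ell$ is as large as possible, i.e. $\ell=n/2-1$, forcing $k=n/2$ (since $k\le n/2$ and $\ell\le k-1$) and giving $\alpha=q^{n/2-1}/(2q^{n/2-1})=1/2$. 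When $\eta((-1)^{n/2})=-1$ the denominator is smaller, so $\alpha=q^{n/2-\ell-1}/(q^{n/2-\ell-1}-1)>1$. I would also flag the boundary subtlety already noted in the excerpt: if $q\equiv3\pmod4$, $n\equiv2\pmod4$, and $\ell=k=n/2$ then $A_{n,n/2,n/2,q}=\sigma_{n,n/2}=0$, so the ratio is undefined there and one must require $\ell+1<k$ when $k=n/2$ in this subcase; this is exactly the degenerate self-dual situation and must be excluded before dividing.

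The main obstacle I anticipate is not conceptual but the careful algebraic reconciliation of the shifted auxiliary quantities in the two $n$-even cases — in particular getting the exponent of $q$ and the $B$-factors to combine correctly, since an off-by-one in the exponent $(k_0(n-k_0)-1)/2$ versus $((k_0-1)(n-k_0+1)-1)/2$ or a misattribution of $B_1$ versus $B_2$ to numerator versus denominator would silently corrupt the final $\alpha$. To control this I would verify each assembled $\alpha$ against the sanity check that the product $\prod_{\ell}\alpha_{n,k,\ell,q}(q^{\ell+1}-1)$ must equal $A_{n,k,0,q}/A_{n,k,k,q}$, and that in the self-dual-adjacent cases the formulae remain positive and finite. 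Once all four $\alpha$ are in closed form, the final two sentences of the lemma follow by the sign analysis above: $\alpha\ge1/2$ with the stated equality in the single flagged case, and $\alpha>1$ in every other case.
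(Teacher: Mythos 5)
Your proposal is correct and follows essentially the same route as the paper: both compute the ratio $A_{n,k,\ell,q}/A_{n,k,\ell+1,q}$ directly from the formulae of Theorem \ref{theorem.Li et al.odd}, exploiting the parity flip $k_0\mapsto k_0-1$ to pair complementary cases, telescoping the products and Gaussian-binomial ratios, tracking the $B_1,B_2$ factors in the $n$-even cases, and isolating the delicate case ($n$ even, $k-\ell$ odd) by the sign of $\eta((-1)^{n/2})$, including the exclusion of the degenerate self-dual situation $q\equiv 3\pmod 4$, $n\equiv 2\pmod 4$, $\ell+1=k=n/2$.
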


\begin{example} We give two examples with ternary codes:
\begin{enumerate}
\item Let $q=3$, $n=8$ and $k=4$. Then $A_{8,4,0,3}=48,958,182$,
$A_{8,4,1,3}=23,587,200$, $A_{8,4,2,3}= 3,276,000$, $A_{8,4,3,3}=89,600$, and $A_{8,4,4,3}=2240$. This gives us that
$$\frac{A_{8,4,0,3}}{A_{8,4,1,3}}=2.07563, \ \frac{A_{8,4,1,3}}{A_{8,4,2,3}}=7.2, \  \frac{A_{8,4,2,3}}{A_{8,4,3,3}}=36.5625, \ \frac{A_{8,4,3,3}}{A_{8,4,4,3}}=40=\frac{3^4-1}{2}.$$

\item Let $q=3$, $n=9$ and $k=4$. Then $A_{9,4,0,3}=3,965,612,742$,
$A_{9,4,1,3}=1,958,327,280$, $A_{9,4,2,3}=241,768,800$, $A_{9,4,3,3}=8,265,600$, and $A_{9,4,4,3}=91840$. This gives us that
$$\frac{A_{9,4,0,3}}{A_{9,4,1,3}}=2.025, \ \frac{A_{9,4,1,3}}{A_{9,4,2,3}}=8.1, \  \frac{A_{9,4,2,3}}{A_{9,4,3,3}}=29.25, \ \frac{A_{9,4,3,3}}{A_{9,4,4,3}}=90.$$
\end{enumerate}
\end{example}

In the next proposition, we give the values of $\ell$ for which $\mu_{n,k,\ell,q}=q^{\ell+1}-1$ for any of the options for the length $n$, when $q$ is odd.

\begin{proposition}\label{prop:oddq}
Let $q$ and $n$ be odd. Then $\mu_{n,k,\ell,q}=q^{\ell+1}-1$, if (1) $k-\ell$ is even and $\ell<\frac{k-1}{2}$, or (2) $k-\ell$ is odd and $\ell<\frac{n-k-1}{2}$.
Moreover, if $n\equiv 2\pmod 4$ and $q\equiv 3\pmod 4$, then $\mu_{n,k,\ell,q}=q^{\ell+1}-1$ if $\ell<\frac{n}{4}-1$.
\end{proposition}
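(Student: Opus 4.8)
The plan is to read off $\mu_{n,k,\ell,q}$ directly from Lemma~\ref{lem-oddq}. By definition, $\mu_{n,k,\ell,q}$ is the largest integer not exceeding the ratio $A_{n,k,\ell,q}/A_{n,k,\ell+1,q}$; that is, $\mu_{n,k,\ell,q}=\lfloor A_{n,k,\ell,q}/A_{n,k,\ell+1,q}\rfloor$. Lemma~\ref{lem-oddq} writes this ratio as $\alpha_{n,k,\ell,q}(q^{\ell+1}-1)$, so the entire argument reduces to locating this quantity within the half-open interval $[q^{\ell+1}-1,\,q^{\ell+1})$. In each of the cases considered here the final sentence of Lemma~\ref{lem-oddq} gives $\alpha_{n,k,\ell,q}>1$, so the ratio strictly exceeds $q^{\ell+1}-1$ and therefore $\mu_{n,k,\ell,q}\ge q^{\ell+1}-1$ for free. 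The only thing left to establish is the upper bound $\alpha_{n,k,\ell,q}(q^{\ell+1}-1)<q^{\ell+1}$.

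First I would dispose of the two odd-$n$ cases. For $n$ odd with $k-\ell$ even, Lemma~\ref{lem-oddq} gives $\alpha_{n,k,\ell,q}=q^{k-\ell}/(q^{k-\ell}-1)$, whence $\alpha_{n,k,\ell,q}(q^{\ell+1}-1)=(q^{\ell+1}-1)+(q^{\ell+1}-1)/(q^{k-\ell}-1)$. The correction term is strictly positive, and it is $<1$ exactly when $q^{\ell+1}<q^{k-\ell}$, i.e. when $2\ell+1<k$, which is precisely the hypothesis $\ell<(k-1)/2$. The case $n$ odd with $k-\ell$ odd is handled identically with $q^{k-\ell}$ replaced by $q^{n-k-\ell}$: the correction term $(q^{\ell+1}-1)/(q^{n-k-\ell}-1)$ is $<1$ exactly when $q^{\ell+1}<q^{n-k-\ell}$, i.e. when $\ell<(n-k-1)/2$. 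In both cases the ratio then lies in the open interval $(q^{\ell+1}-1,\,q^{\ell+1})$, so its floor equals $q^{\ell+1}-1$, as claimed.

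For the final clause I would invoke the even-$n$ branch of Lemma~\ref{lem-oddq}. When $n\equiv 2\pmod 4$ and $q\equiv 3\pmod 4$ one has $\eta((-1)^{n/2})=-1$, so $\alpha_{n,k,\ell,q}=q^{n/2-1}/(q^{n/2-1}-q^\ell)$; exactly as in the display preceding Lemma~\ref{lem-oddq}, this simplifies to $q^{n/2-\ell-1}/(q^{n/2-\ell-1}-1)$. Consequently the correction term $(q^{\ell+1}-1)/(q^{n/2-\ell-1}-1)$ is $<1$ iff $q^{\ell+1}<q^{n/2-\ell-1}$, i.e. $2\ell+2<n/2$, which is exactly the stated condition $\ell<n/4-1$, and again $\mu_{n,k,\ell,q}=q^{\ell+1}-1$.

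I expect no serious obstacle here: the substantive analytic work is already encapsulated in Lemma~\ref{lem-oddq}, and the proposition is a clean corollary obtained by comparing one explicit ratio with the power $q^{\ell+1}$. The one point demanding care is that every inequality on $\ell$ must be \emph{strict}. If equality held in the exponents, say $\ell+1=k-\ell$ (equivalently $k=2\ell+1$), the correction term would equal exactly $1$, the ratio would hit $q^{\ell+1}$, and $\mu_{n,k,\ell,q}$ would jump to $q^{\ell+1}$; this is precisely the boundary value $A_{n,2\ell+1,\ell,q}/A_{n,2\ell+1,\ell+1,q}=q^{\ell+1}$ already recorded after Lemma~\ref{lem-oddq}, and the analogous equalities in the exponents $n-k-\ell$ and $n/2-\ell-1$ account for the boundary behaviour in the remaining cases.
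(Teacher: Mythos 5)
Your proposal is correct and takes essentially the same route as the paper's proof: both read the ratio $A_{n,k,\ell,q}/A_{n,k,\ell+1,q}=\alpha_{n,k,\ell,q}(q^{\ell+1}-1)$ off Lemma~\ref{lem-oddq}, observe that in each of the three cases $\alpha_{n,k,\ell,q}=q^{m}/(q^{m}-1)>1$ with $m\in\{k-\ell,\;n-k-\ell,\;n/2-\ell-1\}$, and reduce the claim to the strict inequality $\alpha_{n,k,\ell,q}(q^{\ell+1}-1)<q^{\ell+1}$, which holds precisely when $\ell+1<m$, i.e.\ under the stated bounds on $\ell$. Your concluding remark on the boundary cases (where the ratio reaches $q^{\ell+1}$ and $\mu_{n,k,\ell,q}$ jumps) matches the paper's observations following Lemma~\ref{lem-oddq}.
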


\begin{proof} In these cases $\alpha_{n,k,\ell,q}=\frac{q^m}{q^m-1}>1$, where
\[
m=\left\{\begin{array}{ll}
k-\ell&\mbox{if} \ n \ \mbox{is odd}, k-\ell \ \mbox{is even},\\
n-k-\ell&\mbox{if} \ n \ \mbox{is odd}, k-\ell \ \mbox{is odd},\\
n/2-\ell-1&\mbox{if} \ n\equiv 2\pmod 4, k-\ell \ \mbox{is odd}, \mbox{and} \ q\equiv 3\pmod 4.
\end{array}\right.
\]

Recall that $\mu_{n,k,\ell,q}$ is the largest integer such that
$$A_{n,k,\ell,q}=\alpha_{n,k,\ell,q}(q^{\ell+1}-1)A_{n,k,\ell+1,q} \ge \mu_{n,k,\ell,q} A_{n,k,\ell+1,q}.$$
Hence $\mu_{n,k,\ell,q}$ is the largest integer such that
$\alpha_{n,k,\ell,q}(q^{\ell+1}-1) \ge \mu_{n,k,\ell,q} $.
It follows that $\mu_{n,k,\ell,q}=q^{\ell+1}-1$ when $\alpha_{n,k,\ell,q}(q^{\ell+1}-1)<q^{\ell+1}$. So
$$\frac{q^m}{q^m-1}(q^{\ell+1}-1)<q^{\ell+1}.$$
This inequality holds when $\ell+1<m$. If $m=k-\ell$ then $\ell+1<m$ in the case when $\ell<\frac{k-1}{2}$. In the case when $m=n-k-\ell$ the inequality $\ell+1<m$ holds in the case $\ell<\frac{n-k-1}{2}$. Finally, if $m=n/2-\ell-1$, then $\ell+1<m$ when $\ell<\frac{n}{4}-1$.
\end{proof}

\subsection{Even $q$}

Similar to the previous subsection, we consider four cases as  in \eqref{evenq}.

\begin{itemize}
\item Let $n$ be even, $k-\ell$ be odd. Now, if $n=2k$ and $k-\ell=1$ then the factor $q^{n-k-\ell-1}-1$ in the denominator is equal to 0 and therefore we cannot use the formula from Corollary \ref{cor-evenq}. Therefore, we consider this case separately.

    If $n>2k$, or $n=2k$ and $k-\ell>1$, then
\begin{align*}
  A_{n,k,\ell,q}/A_{n,k,\ell+1,q} & =\left(\prod_{i=1}^\ell\frac{q^{n-k_0-2i+1}-1}{(q^i-1)q^{k_0}}\right)q^{(nk_0-k_0^2+n-1)/2}{n/2-1\brack(k_0-1)/2 }_{q^2}\\ &/\left(\prod_{i=1}^{\ell+1}\frac{q^{n-k_0+1-2i}-1}{(q^i-1)q^{k_0-2}}\right)q^{(k_0-1)(n-k_0+1)/2}\frac{q^{n-\ell-1}-1}{q^{\ell+1}(q^{n-k-\ell-1}-1)}{n/2-1\brack (k_0-1)/2 }_{q^2} \\
  & =\frac{(q^{\ell+1}-1)q^{k_0-2}q^{\ell+1}(q^{n-k-\ell-1}-1)q^{(nk_0-k_0^2+n-1)/2}}{q^{2\ell}(q^{n-k-\ell-1}
  -1)(q^{n-\ell-1}-1)q^{(k_0-1)(n-k_0+1)/2}}\\
  & =\frac{(q^{\ell+1}-1)q^{k_0-2}q^{n-k_0}}{q^{\ell-1}(q^{n-\ell-1}-1)}= \frac{(q^{\ell+1}-1)q^{n-\ell-1}}{q^{n-\ell-1}-1}>q^{\ell+1}-1.
\end{align*}

If $n=2k$ and $k-\ell=1$, we have
\begin{align*}
  A_{2k,k,k-1,q}/A_{2k,k,k,q} & =\left(\prod_{i=1}^{k-1}\frac{q^{2k-2i}-1}{(q^i-1)q}\right)q^{n-1} /\left(\prod_{i=1}^{k}\frac{q^{2k-2i+2}-1}{q^i-1}\right)\frac{1}{q^k+1} \\
  & =\frac{q^{2k-1}(q^k+1)(q^k-1)}{q^{k-1}(q^{2k}-1)}=q^k=q^{\ell+1}.
\end{align*}

\item Let $n$ and $k_0$ be odd. In this case $n>2k> k+\ell$ and therefore we can freely use the formula from the corollary. Now $k_0'=k_0-1$ is even and so 
\begin{align*}
  A_{n,k,\ell,q}/A_{n,k,\ell+1,q} & =\left(\prod_{i=1}^{\ell}\frac{q^{n-k_0-2i}-1}{(q^i-1)q^{k_0-1}}\right)q^{(n-k_0)(k_0-1)/2+n-k}\frac{q^{n-k+\ell}-1}{q^{\ell}(q^{n-k-\ell}-1)}{(n-1)/2\brack(k_0-1)/2 }_{q^2}\\ &/\left(\prod_{i=1}^{\ell+1}\frac{q^{n-k_0-2i+2}-1}{(q^i-1)q^{k_0-1}}\right)q^{(k_0-1)(n-k_0+2)/2}{(n-1)/2\brack (k_0-1)/2 }_{q^2} \\
  & = \left(\prod_{i=1}^{\ell}\frac{q^{n-k_0-2i}-1}{q^{n-k_0-2i+2}-1}\right)\frac{(q^{\ell+1}-1)q^{k_0-1}}{q^{n-k_0-2\ell}-1}q^{\ell+1+n-2k} \frac{q^{n-k+\ell}-1}{q^{\ell}(q^{n-k-\ell}-1)}\\
   & = \frac{(q^{\ell+1}-1)q^{n-k}}{q^{\ell}(q^{n-k-\ell}-1)} =\frac{q^{n-k-\ell}}{q^{n-k-\ell}-1}(q^{\ell+1}-1)>q^{\ell+1}-1
\end{align*}

\item Let $n$ be odd and $k-\ell$ be even. Then $k_0'=k_0-1$ is odd, but since $k+\ell+1\le 2k<n$, we can apply the formulae from Corollary \ref{cor-evenq}.

\begin{align*}
  A_{n,k,\ell,q}/A_{n,k,\ell+1,q} & =\left(\prod_{i=1}^\ell\frac{q^{n-k_0-2i+1}-1}{(q^i-1)q^{k_0}}\right)q^{k_0(n-k_0+1)/2}{(n-1)/2\brack k_0/2 }_{q^2}\\ &/\left(\prod_{i=1}^{\ell+1}\frac{q^{n-k_0+1-2i}-1}{(q^i-1)q^{k_0-2}}\right)q^{(n-k_0+1)(k_0-2)/2+n-k}\frac{q^{n-k+\ell+1}-1}{q^{\ell+1}(q^{n-k-\ell-1}-1)}{(n-1)/2\brack(k_0-2)/2 }_{q^2} \\
  & =\frac{q^{\ell+1}(q^{n-k-\ell-1}-1)}{q^{2\ell}(q^{n-k+\ell+1}-1)}q^{\ell+1}\frac{(q^{\ell+1}-1)q^{k_0-2}}{q^{n-k-\ell-1}-1}\frac{q^{n-k+\ell+1}-1}{q^{k-\ell}-1} \\
  & =q^{k-\ell}\frac{(q^{\ell+1}-1)}{q^{k-\ell}-1}>q^{\ell+1}-1.
\end{align*}


\item Let $n$ and $k-\ell$ be even. Then $n\ge 2k>k+\ell$ and so $n-k-\ell>0$. Now $k_0'=k_0-1$ is odd and so 
\begin{align*}
  A_{n,k,\ell,q}/A_{n,k,\ell+1,q} & =\left(\prod_{i=1}^{\ell}\frac{q^{n-k_0-2i}-1}{(q^i-1)q^{k_0-1}}\right)q^{k_0(n-k_0)/2}\frac{q^{n-\ell}-1}{q^{\ell}(q^{n-k-\ell}-1)}{n/2-1\brack k_0/2 }_{q^2}\\ &/\left(\prod_{i=1}^{\ell+1}\frac{q^{n-k_0-2i+2}-1}{(q^i-1)q^{k_0-1}}\right)q^{k_0(n-k_0)/2+k_0-1}{n/2-1\brack(k_0-2)/2 }_{q^2} \\
  &=\left(\prod_{i=1}^{\ell}\frac{q^{n-k_0-2i}-1}{q^{n-k_0-2i+2}-1}\right)\frac{(q^{\ell+1}-1)(q^{n-\ell}-1)(q^{n-k+\ell}-1)}{(q^{n-k-\ell}-1)
  q^{\ell}(q^{n-k-\ell}-1)(q^{k-\ell}-1)}\\
   &=\frac{(q^{\ell+1}-1)(q^{n-\ell}-1)}{q^{\ell}(q^{n-k-\ell}-1)(q^{k-\ell}-1)}>q^{\ell+1}-1
\end{align*}
\end{itemize}

In this way, we proved the following lemma.

\begin{lemma}\label{lem-evenq} Let $q$ be a power of $2$, $n$ and $k$ be positive integers with $k\le n/2$,and $\ell$ be an integer such that $0\le\ell\le k-1$. Then
$$A_{n,k,\ell,q}=\alpha_{n,k,\ell,q}(q^{\ell+1}-1)A_{n,k,\ell+1,q},$$
where
\[
\alpha_{n,k,\ell,q}=\left\{\begin{array}{ll}
\frac{q^{n-\ell-1}}{q^{n-\ell-1}-1}&\mbox{if} \ n \ \mbox{is even}, k-\ell \ \mbox{is odd},\\
\frac{q^{n-k-\ell}}{q^{n-k-\ell}-1}&\mbox{if} \ n \ \mbox{is odd}, k-\ell \ \mbox{is odd},\\
\frac{q^{k-\ell}}{q^{k-\ell}-1}&\mbox{if} \ n \ \mbox{is odd}, k-\ell \ \mbox{is even},\\
\frac{q^{n-\ell}-1}{q^{\ell}(q^{n-k-\ell}-1)(q^{k-\ell}-1)}&\mbox{if} \ n \ \mbox{is even}, k-\ell \ \mbox{is even}.
\end{array}\right.
\]
In all cases $\alpha_{n,k,\ell,q}>1$.
\end{lemma}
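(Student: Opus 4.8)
The plan is to mirror the odd-$q$ argument behind Lemma \ref{lem-oddq}, now feeding in the closed forms of Corollary \ref{cor-evenq} rather than those of Theorem \ref{theorem.Li et al.odd}. The pivotal structural remark is that increasing $\ell$ by one while holding $k$ fixed sends $k_0=k-\ell$ to $k_0'=k_0-1$ and therefore flips its parity, so for each fixed parity of $n$ the numerator $A_{n,k,\ell,q}$ and the denominator $A_{n,k,\ell+1,q}$ are governed by two different rows of Corollary \ref{cor-evenq} sharing the same parity of $n$. I would accordingly split into the four cases of the statement (even/odd $n$ against even/odd $k_0$), form the quotient $A_{n,k,\ell,q}/A_{n,k,\ell+1,q}$ in each, and read off $\alpha_{n,k,\ell,q}$ as the quotient divided by $q^{\ell+1}-1$.

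The bulk of the work is the simplification in each case. The two products in the quotient differ by one extra factor, and after cancellation the surviving terms form a telescoping chain $\prod_{i=1}^{\ell}\frac{q^{a-2i}-1}{q^{a-2i+2}-1}=\frac{q^{a-2\ell}-1}{q^{a}-1}$, so the product contributes only a single linear factor. The ratio of the two base-$q^2$ Gaussian binomial coefficients I would reduce using identities (2)--(5) of Section \ref{sect:preliminaries}; in every case the two binomials differ by a single unit step in the top or the bottom entry, so their ratio collapses to one more factor of the form $(q^{c}-1)/(q^{d}-1)$. Collecting the accumulated powers of $q$ and the remaining factors $q^{\ell+1}-1$, $q^{n-\ell-1}-1$, $q^{n-k-\ell}-1$, $q^{k-\ell}-1$, $q^{n-\ell}-1$ should leave exactly the four tabulated values of $\alpha_{n,k,\ell,q}$.

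A single boundary instance resists this scheme and must be treated by hand, namely $n=2k$ with $k-\ell=1$: there $\ell+1=k=n-(\ell+1)$, so $A_{n,k,\ell+1,q}=A_{2k,k,k,q}$ counts self-dual codes, and the factor $q^{n-k-\ell-1}-1$ in the denominator of Corollary \ref{cor-evenq} vanishes, making that formula inapplicable. Here I would instead substitute $A_{2k,k,k,q}=\sigma_{2k,k}=\frac{1}{q^k+1}\prod_{i=1}^{k}\frac{q^{2k-2i+2}-1}{q^i-1}$ from the even-$n$, even-$q$ row of Theorem \ref{thm:sigma}, together with the directly computable value of $A_{2k,k,k-1,q}$ from the first row of Corollary \ref{cor-evenq}, and check that the quotient equals $q^{k}=q^{\ell+1}$. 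This agrees with the tabulated first-row formula, since with $\ell=k-1$ and $n=2k$ one has $n-\ell-1=k$ and hence $\frac{q^{n-\ell-1}}{q^{n-\ell-1}-1}(q^{\ell+1}-1)=q^{k}$; only the derivation, not the stated value of $\alpha$, changes on this boundary.

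Finally I would confirm $\alpha_{n,k,\ell,q}>1$. In the first three cases $\alpha=\frac{q^{m}}{q^{m}-1}$ with $m\in\{n-\ell-1,\ n-k-\ell,\ k-\ell\}$, each a positive integer under $0\le\ell\le k-1$ and $k\le n/2$, so the inequality is immediate. The genuine case is $n$ even and $k-\ell$ even, where writing $a=n-k-\ell$, $b=k-\ell$ gives $a+b=n-2\ell$, so $q^{\ell}(q^{a}-1)(q^{b}-1)=q^{n-\ell}-q^{n-k}-q^{k}+q^{\ell}$ and $\alpha>1$ is equivalent to $q^{n-k}+q^{k}>1+q^{\ell}$; this holds because $q^{k}\ge q^{\ell+1}\ge q^{\ell}+1$ while $q^{n-k}\ge q\ge 2$. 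I expect the telescoping bookkeeping and the Gaussian-binomial reductions, carried out consistently across all four cases, to be the only real obstacle; the positivity check and the degenerate instance are short once the closed forms are assembled.
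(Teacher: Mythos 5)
Your proposal is correct and follows essentially the same route as the paper's proof: the same four-case split from Corollary \ref{cor-evenq} exploiting the parity flip of $k_0=k-\ell$, the same telescoping-product and Gaussian-binomial simplifications, and the same separate treatment of the boundary case $n=2k$, $k-\ell=1$ via the self-dual code count, yielding ratio $q^k=q^{\ell+1}$. The only (welcome) difference is that you spell out the inequality $\alpha_{n,k,\ell,q}>1$ in the case $n$ even, $k-\ell$ even, reducing it to $q^{n-k}+q^k>1+q^\ell$, whereas the paper simply asserts it.
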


\begin{example}\label{ex-evenq} We present three examples:
\begin{enumerate}
\item Let $q=2$, $n=10$ and $k=5$. Then $A_{10,5,0,2}=46,792,704$,
$A_{10,5,1,2}=46,701,312$, $A_{10,5,2,2}= 13,708,800$, $A_{10,5,3,2}=1,943,100$, $A_{10,5,4,2}=73440$, and $A_{10,5,5,2}=2295$. This gives us that
$$\frac{A_{10,5,0,2}}{A_{10,5,1,2}}=1.00196, \ \frac{A_{10,5,1,2}}{A_{10,5,2,2}}=3.40667, \  \frac{A_{10,5,2,2}}{A_{10,5,3,2}}=7.05512, \ \frac{A_{10,5,3,2}}{A_{10,5,4,2}}=26.4583,$$
$$\frac{A_{10,5,4,2}}{A_{10,5,5,2}}=32=2^5.$$

\item Let $q=2$, $n=9$ and $k=4$. Then $A_{9,4,0,2}=1,462,272$,
$A_{9,4,1,2}=1,370,880$, $A_{9,4,2,2}=428,400$, $A_{9,4,3,2}=45900$, and $A_{9,4,4,2}=2295$. This gives us that
$$\frac{A_{9,4,0,2}}{A_{9,4,1,2}}=1.06667, \ \frac{A_{9,4,1,2}}{A_{9,4,2,2}}=3.2, \  \frac{A_{9,4,2,2}}{A_{9,4,3,2}}=9.33333, \ \frac{A_{9,4,3,2}}{A_{9,4,4,2}}=20.$$

\item Let $q=4$, $n=8$ and $k=4$. Then $A_{8,4,0,4}=4,598,071,296$,
$A_{8,4,1,4}=1,520,762,880$, $A_{8,4,2,4}=101,359,440$, $A_{8,4,3,4}=1,414,400$, and $A_{8,4,4,4}=5525$. This gives us that
$$\frac{A_{8,4,0,4}}{A_{8,4,1,4}}=3.02353, \ \frac{A_{8,4,1,4}}{A_{8,4,2,4}}=15.0037, \  \frac{A_{8,4,2,4}}{A_{8,4,3,4}}=71.6625, \ \frac{A_{8,4,3,4}}{A_{8,4,4,4}}=256=4^4.$$

\end{enumerate}
\end{example}

In the following proposition, we present the relationship between the coefficients $\mu_{n,k,\ell,q}$ and $\alpha_{n,k,\ell,q}$ when $q$ is even. Its proof is similar to the proof of Proposition \ref{prop:oddq}.

\begin{proposition}
If $n$ is even and $k-\ell$ is odd, then $\mu_{n,k,\ell,q}=q^{\ell+1}-1$ unless $n=2k$ and $\ell=k-1$, when $\mu_{n,k,\ell,q}=\alpha_{n,k,\ell,q}=q^{\ell+1}$.

If $n$ is odd and $k-\ell$ is odd, then $\mu_{n,k,\ell,q}=q^{\ell+1}-1$ when $\ell<\frac{n-k-1}{2}$, otherwise $\mu_{n,k,\ell,q}\ge q^{\ell+1}$.

If $n$ is odd and $k-\ell$ is even, then $\mu_{n,k,\ell,q}=q^{\ell+1}-1$ when $\ell<\frac{k-1}{2}$, otherwise $\mu_{n,k,\ell,q}\ge q^{\ell+1}$.
\end{proposition}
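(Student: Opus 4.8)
The plan is to run exactly the argument of Proposition~\ref{prop:oddq}, now fed by the four values of $\alpha_{n,k,\ell,q}$ supplied by Lemma~\ref{lem-evenq} rather than Lemma~\ref{lem-oddq}. The starting reformulation is identical: since $A_{n,k,\ell,q}=\alpha_{n,k,\ell,q}(q^{\ell+1}-1)A_{n,k,\ell+1,q}$, the number $\mu_{n,k,\ell,q}$ is the largest integer not exceeding $\alpha_{n,k,\ell,q}(q^{\ell+1}-1)$, i.e. $\mu_{n,k,\ell,q}=\lfloor\alpha_{n,k,\ell,q}(q^{\ell+1}-1)\rfloor$. Because Lemma~\ref{lem-evenq} asserts $\alpha_{n,k,\ell,q}>1$ throughout, we always have $\alpha_{n,k,\ell,q}(q^{\ell+1}-1)>q^{\ell+1}-1$, so $\mu_{n,k,\ell,q}\ge q^{\ell+1}-1$ for free; the entire question collapses to deciding whether $\alpha_{n,k,\ell,q}(q^{\ell+1}-1)<q^{\ell+1}$ (forcing $\mu_{n,k,\ell,q}=q^{\ell+1}-1$) or $\alpha_{n,k,\ell,q}(q^{\ell+1}-1)\ge q^{\ell+1}$ (forcing $\mu_{n,k,\ell,q}\ge q^{\ell+1}$).

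First I would record the one computation that drives all three stated cases: each of the relevant $\alpha$'s from Lemma~\ref{lem-evenq} has the shape $\alpha=\frac{q^{m}}{q^{m}-1}$, with $m=n-\ell-1$ when $n$ is even and $k-\ell$ odd, $m=n-k-\ell$ when $n$ is odd and $k-\ell$ odd, and $m=k-\ell$ when $n$ is odd and $k-\ell$ even. Clearing denominators shows $\frac{q^{m}}{q^{m}-1}(q^{\ell+1}-1)<q^{\ell+1}$ holds precisely when $q^{m}>q^{\ell+1}$, that is, exactly when $m>\ell+1$. Substituting the three values of $m$ turns $m>\ell+1$ into the thresholds $\ell<n/2-1$, $\ell<\frac{n-k-1}{2}$, and $\ell<\frac{k-1}{2}$ respectively; below each threshold $\mu_{n,k,\ell,q}=q^{\ell+1}-1$, while on or above it $\alpha_{n,k,\ell,q}(q^{\ell+1}-1)\ge q^{\ell+1}$ gives $\mu_{n,k,\ell,q}\ge q^{\ell+1}$. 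This already yields the two odd-$n$ statements verbatim.

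For the even-$n$, odd-$(k-\ell)$ case I then convert $\ell<n/2-1$ into the form in the proposition. As $k\le n/2$ and $\ell\le k-1$, one always has $\ell\le k-1\le n/2-1$, and the equality $\ell=n/2-1$ can occur only when $k=n/2$ and $\ell=k-1$ simultaneously, i.e. exactly when $n=2k$ and $\ell=k-1$. Hence the threshold $\ell<n/2-1$ fails at this single pair and nowhere else, so $\mu_{n,k,\ell,q}=q^{\ell+1}-1$ away from it. This exceptional pair is precisely the one excluded in Corollary~\ref{cor-evenq}, where a zero appears in the denominator, so the Lemma~\ref{lem-evenq} value of $\alpha$ is unavailable there; in its place I invoke the direct evaluation already made in the even-$q$ discussion, namely $A_{2k,k,k-1,q}/A_{2k,k,k,q}=q^{k}=q^{\ell+1}$, giving $\mu_{n,k,\ell,q}=q^{\ell+1}$ (equivalently $\alpha_{n,k,\ell,q}(q^{\ell+1}-1)=q^{\ell+1}$).

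The only genuine deviation from Proposition~\ref{prop:oddq}, and the step I expect to need the most care, is this boundary pair $(n,\ell)=(2k,k-1)$: there the product-formula expression for $\alpha$ degenerates and must be replaced by the separately computed exact ratio, so one cannot simply quote the generic $\frac{q^{m}}{q^{m}-1}$ analysis. Everywhere else the reasoning is mechanical, since the monotonicity of $m\mapsto q^{m}$ makes the equivalence $m>\ell+1\Leftrightarrow\alpha_{n,k,\ell,q}(q^{\ell+1}-1)<q^{\ell+1}$ transparent and the floor characterization of $\mu_{n,k,\ell,q}$ finishes each case. (The remaining configuration, $n$ even with $k-\ell$ even, is left out of the proposition because its $\alpha$ does not reduce to the form $q^{m}/(q^{m}-1)$ and so does not yield a single clean threshold.)
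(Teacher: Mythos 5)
Your proposal is correct and follows essentially the same route as the paper: the paper's own proof is just the remark that the argument of Proposition~\ref{prop:oddq} (the floor characterization $\mu_{n,k,\ell,q}=\lfloor\alpha_{n,k,\ell,q}(q^{\ell+1}-1)\rfloor$ and the threshold $m>\ell+1$ for $\alpha=q^m/(q^m-1)$) carries over with the values of $\alpha_{n,k,\ell,q}$ from Lemma~\ref{lem-evenq}, which is exactly what you execute. Your separate treatment of the boundary pair $n=2k$, $\ell=k-1$ via the directly computed ratio $A_{2k,k,k-1,q}/A_{2k,k,k,q}=q^{k}$ matches the paper's own handling of that degenerate case in the even-$q$ derivation.
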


In Example \ref{ex-evenq}, if $q=2$, $n=9$, $k=4$, $\ell=3$, we have $\mu_{9,4,3,2}= 20$, and if $q=2$, $n=9$, $k=4$, $\ell=2$, we have $\mu_{9,4,2,2}= 9$.

We summarise the results of this section in the following theorem.

\begin{theorem}\label{thm:ineq}
Let $\ell+1\le k\le n-\ell-1$. Then
$A_{n,k,\ell,q}>(q^{\ell+1}-1)A_{n,k,\ell+1,q}$
for all possible values of $q$, $n$, $k$ and $\ell$ except  when $q$ is odd, $n\equiv 2\pmod 4$, $-1$ is not a square in $\F_q^*$, and $k-\ell$ is odd, in which case
$A_{n,k,\ell,q}\ge\frac{q^{\ell+1}-1}{2}A_{n,k,\ell+1,q}$.
\end{theorem}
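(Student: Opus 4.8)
The plan is to assemble Theorem \ref{thm:ineq} directly from the two case-analysis lemmas already established, namely Lemma \ref{lem-oddq} for odd $q$ and Lemma \ref{lem-evenq} for even $q$. Both lemmas furnish the exact identity
\[
A_{n,k,\ell,q}=\alpha_{n,k,\ell,q}(q^{\ell+1}-1)A_{n,k,\ell+1,q},
\]
so proving the theorem reduces entirely to controlling the multiplier $\alpha_{n,k,\ell,q}$ in each of the eight cases. Since the theorem asserts a strict inequality $A_{n,k,\ell,q}>(q^{\ell+1}-1)A_{n,k,\ell+1,q}$ in all but one regime, the whole argument is equivalent to showing $\alpha_{n,k,\ell,q}>1$ everywhere except in the exceptional regime, where we must instead show $\alpha_{n,k,\ell,q}\ge\tfrac12$.

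First I would dispatch the even-$q$ case outright: Lemma \ref{lem-evenq} already states $\alpha_{n,k,\ell,q}>1$ in all four of its cases, so the strict inequality holds with no further work. Next I would handle the odd-$q$ cases by invoking Lemma \ref{lem-oddq}, which records that $\alpha_{n,k,\ell,q}>1$ in every case \emph{except} when $k-\ell$ is odd together with ($n\equiv0\pmod 4$, or $n\equiv 2\pmod 4$ and $q\equiv 1\pmod 4$), where it gives only $\alpha_{n,k,\ell,q}\ge\tfrac12$. The key translation step is to observe that the condition "$q$ odd, $n\equiv 2\pmod 4$, and $-1$ is not a square in $\F_q^*$" (Condition $*$ / the exceptional hypothesis) is exactly the complement, within the odd-$q$ setting, of the cases forcing $\eta((-1)^{n/2})=1$. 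Concretely, $\eta((-1)^{n/2})=1$ precisely when $n\equiv0\pmod4$ or when $n\equiv 2\pmod4$ with $-1$ a square in $\F_q^*$; so the only odd-$q$, $k-\ell$-odd situation in which $\alpha$ can dip below $1$ is when $\eta((-1)^{n/2})=-1$, i.e. exactly the exceptional hypothesis of the theorem. In that regime Lemma \ref{lem-oddq} supplies $\alpha_{n,k,\ell,q}\ge\tfrac12$, yielding $A_{n,k,\ell,q}\ge\tfrac{q^{\ell+1}-1}{2}A_{n,k,\ell+1,q}$ as claimed.

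The one point demanding care — and the main obstacle — is bookkeeping the Legendre-symbol condition across the two different phrasings used in the paper. The hypothesis of Lemma \ref{lem-oddq} is stated in terms of $\eta((-1)^{n/2})$ and congruences of $n$ and $q$ modulo $4$, whereas Theorem \ref{thm:ineq} is phrased via "$-1$ is not a square in $\F_q^*$". I would reconcile these by recalling that $-1$ is a square in $\F_q^*$ if and only if $q\equiv 1\pmod 4$, so that for $n\equiv 2\pmod 4$ we have $(-1)^{n/2}=-1$ and $\eta((-1)^{n/2})=\eta(-1)=+1$ iff $q\equiv1\pmod4$, and $=-1$ iff $q\equiv3\pmod4$, i.e. iff $-1$ is not a square. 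This confirms that the two exceptional descriptions coincide, and that in every non-exceptional case one of the two lemmas delivers $\alpha_{n,k,\ell,q}>1$, completing the proof.

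\begin{proof}
The identity $A_{n,k,\ell,q}=\alpha_{n,k,\ell,q}(q^{\ell+1}-1)A_{n,k,\ell+1,q}$ holds by Lemma \ref{lem-evenq} when $q$ is even and by Lemma \ref{lem-oddq} when $q$ is odd, so the claim reduces to bounding $\alpha_{n,k,\ell,q}$. If $q$ is even, Lemma \ref{lem-evenq} gives $\alpha_{n,k,\ell,q}>1$ in all four cases, hence $A_{n,k,\ell,q}>(q^{\ell+1}-1)A_{n,k,\ell+1,q}$. Suppose now $q$ is odd. By Lemma \ref{lem-oddq}, we have $\alpha_{n,k,\ell,q}>1$ in every case except when $k-\ell$ is odd and $n$ is even with $\eta((-1)^{n/2})=1$. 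Recall that $-1$ is a square in $\F_q^*$ exactly when $q\equiv 1\pmod 4$. Thus for $n\equiv 0\pmod 4$ we get $\eta((-1)^{n/2})=\eta(1)=1$, while for $n\equiv 2\pmod 4$ we get $\eta((-1)^{n/2})=\eta(-1)=1$ iff $q\equiv 1\pmod 4$, that is, iff $-1$ is a square in $\F_q^*$. Hence the only odd-$q$, $k-\ell$-odd situation with $\eta((-1)^{n/2})=-1$ is precisely when $n\equiv 2\pmod 4$ and $-1$ is not a square in $\F_q^*$. In all remaining cases $\alpha_{n,k,\ell,q}>1$ and the strict inequality follows; in the exceptional case Lemma \ref{lem-oddq} yields $\alpha_{n,k,\ell,q}\ge\tfrac12$, so $A_{n,k,\ell,q}\ge\tfrac{q^{\ell+1}-1}{2}A_{n,k,\ell+1,q}$.
\end{proof}
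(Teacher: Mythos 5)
Your overall strategy --- assembling the theorem from Lemmas \ref{lem-oddq} and \ref{lem-evenq} --- is exactly the paper's own route (the theorem is explicitly stated as a summary of Section \ref{sec:proofs}), and your even-$q$ half is fine. But the odd-$q$ half contains a decisive logical flip. You correctly quote Lemma \ref{lem-oddq}: $\alpha_{n,k,\ell,q}>1$ in every case \emph{except} when $k-\ell$ is odd and $\eta((-1)^{n/2})=+1$, i.e. when $n\equiv 0\pmod 4$, or $n\equiv 2\pmod 4$ with $q\equiv 1\pmod 4$; in that case the lemma gives only $\alpha_{n,k,\ell,q}\ge\tfrac{1}{2}$ (with equality at $k=n/2$, $\ell=k-1$). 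You also correctly translate the Legendre condition: $\eta((-1)^{n/2})=-1$ iff $n\equiv 2\pmod 4$ and $-1$ is not a square in $\F_q^*$. But you then conclude that the situation where $\alpha$ can dip below $1$ is the $\eta((-1)^{n/2})=-1$ case, ``i.e. exactly the exceptional hypothesis of the theorem.'' That reverses the lemma: the dip-below-one case is $\eta=+1$, which is the \emph{complement} (within even $n$, odd $k-\ell$) of the theorem's stated exception, while $\eta=-1$ is precisely a case where the lemma gives the strict bound. As written, your last sentence contradicts your first: you assert strict inequality in the $\eta=+1$ cases where only $\alpha\ge\tfrac{1}{2}$ is available, and you invoke the weak bound in the $\eta=-1$ case where it is not needed.

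The deeper issue is that this flip cannot be patched, because the theorem as stated is inconsistent with the paper's own analysis and data, so no correct proof of it exists. Take the paper's first example: $q=3$, $n=8\equiv 0\pmod 4$, $k=4$, $\ell=3$, so $k-\ell$ is odd and $\ell+1\le k\le n-\ell-1$ holds. The stated exception does not apply (since $n\not\equiv 2\pmod 4$), so the theorem asserts $A_{8,4,3,3}>(3^4-1)A_{8,4,4,3}=80\cdot 2240$; but the paper itself records $A_{8,4,3,3}=89600=40\cdot 2240$, i.e. the ratio equals $\frac{3^4-1}{2}$ exactly. The exceptional condition in the theorem (and ``Condition *'' in the introduction) is inverted relative to what Section \ref{sec:proofs} proves: it should read ``$q$ odd, $k-\ell$ odd, and either $n\equiv 0\pmod 4$, or $n\equiv 2\pmod 4$ with $-1$ \emph{a} square in $\F_q^*$,'' matching $\eta((-1)^{n/2})=+1$ in Lemma \ref{lem-oddq}. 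A sound write-up would either prove the theorem with this corrected exception or flag the discrepancy; your proof instead silently identifies two complementary conditions, and that identification is where it fails.
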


\section{Computational results}
\label{Results}

In this section, we present computational results with the number of inequivalent linear $[n,k,\ge 2]$  codes of different types for given length and dimension over $\F_2$ and $\F_3$, whose dual distance is at least 2 (this means that their generator matrices have no zero columns). Denote by $\mathcal{B}_q(n,k,\ell)$ the set of all inequivalent $q$-ary linear codes whose hull has dimension $\ell$. In the binary case, the mass formula, that helps to verify classification results for these codes, is the following
\begin{equation}
A_{n,k,\ell,2}=\sum_{C\in \mathcal{B}_2(n,k,\ell)}\frac{n!}{|\mathrm{Aut}(\mathcal{C})|}
\end{equation}

In the ternary case we have
\begin{equation}
A_{n,k,\ell,3}=\sum_{C\in \mathcal{B}_3(n,k,\ell)}\frac{2^n n!}{|\mathrm{Aut}(\mathcal{C})|}.
\end{equation}

More information on these formulae for self-dual codes can be found in \cite{HP}. Araya and Harada in \cite{Araya-Harada2019} used them in the classification of the binary LCD codes of length $n\le 13$ and ternary LCD codes of length $n\le 10$. They described clearly how to use these mass formulae for the verification of the computational results for classification of the binary LCD $[6,3]$ codes with all possible minimum distances $d\ge 1$ and dual distances $d^\perp\ge 1$.

The classification of linear codes with the same lengths, namely $n\le 13$ for $q=2$ and $n\le 9$ for $q=3$, is given in \cite{LiShi} and \cite{LiShiLing}, respectively. The difference compared to \cite{Araya-Harada2019} is that the authors also present the number of codes with different hull dimensions. Examining the tables in these papers, we notice that even then the number of inequivalent $[n,k]$ codes decreases as the dimension of the hull increases for $n\ge 2k$. The only exceptions are with the codes with hull dimension 0 and 1 in the binary case, for example $|\mathcal{B}_2(11,4,0)|=348<|\mathcal{B}_2(11,4,1)|=420$. The tables in \cite{LiShi} and \cite{LiShiLing} and our classification results, listed in Table \ref{table1}, give us reason to present the following hypothesis.

\begin{conjecture}
Let $B_{n,k,\ell,q}$ be the number of all inequivalent linear codes of length $n$, dimension $k$ and hull dimension $\ell$ over $\F_q$. If $q=2$ or 3 and $n\ge 2k$, then
$$\min\{B_{n,k,0,q},B_{n,k,1,q}\}>B_{n,k,2,q}>\cdots>B_{n,k,k,q}.$$
\end{conjecture}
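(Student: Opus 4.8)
The plan is to transfer the monotonicity of the exact counts $A_{n,k,\ell,q}$, now controlled by Theorem~\ref{thm:ineq}, to the orbit counts $B_{n,k,\ell,q}$ through the mass formulae recalled above, read as Burnside's orbit-counting lemma. Let $G$ be the full equivalence group acting on $\F_q^n$, so that $|G|=n!$ for $q=2$ and $|G|=2^n n!$ for $q=3$ (the field automorphism being trivial in both cases). Then
$$B_{n,k,\ell,q}=\frac{1}{|G|}\sum_{g\in G}\mathrm{Fix}_\ell(g),$$
where $\mathrm{Fix}_\ell(g)$ is the number of $[n,k]_q$ codes of hull dimension $\ell$ fixed by $g$. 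The identity contributes exactly $A_{n,k,\ell,q}/|G|$ and every other summand is non-negative, so, writing $c_\ell=\frac{1}{A_{n,k,\ell,q}}\sum_{g\neq 1}\mathrm{Fix}_\ell(g)\ge 0$, we have $B_{n,k,\ell,q}=(1+c_\ell)A_{n,k,\ell,q}/|G|$. The quantity $c_\ell$ measures the weighted abundance of codes of hull dimension $\ell$ carrying non-trivial automorphisms, and the whole problem will reduce to controlling it.

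First I would treat the tail $B_{n,k,\ell,q}>B_{n,k,\ell+1,q}$ for $\ell\ge 1$. Using $B_{n,k,\ell,q}\ge A_{n,k,\ell,q}/|G|$ together with the gap $A_{n,k,\ell,q}>(q^{\ell+1}-1)A_{n,k,\ell+1,q}$ furnished by Theorem~\ref{thm:ineq} (with $\frac{q^{\ell+1}-1}{2}$ in place of $q^{\ell+1}-1$ in the single exceptional ternary case), it suffices to keep the relative correction $c_{\ell+1}$ strictly below this gap, say $c_{\ell+1}\le q^{\ell+1}-2$ for $q=2$; indeed then
$$B_{n,k,\ell,q}\ge\frac{A_{n,k,\ell,q}}{|G|}>\frac{(q^{\ell+1}-1)A_{n,k,\ell+1,q}}{|G|}\ge\frac{(1+c_{\ell+1})A_{n,k,\ell+1,q}}{|G|}=B_{n,k,\ell+1,q}.$$
Since the admissible threshold grows geometrically in $\ell$, the binding instance is the smallest one (binary $\ell=1$, where one needs $c_2<2$), and the entire tail reduces to bounding the first few corrections $c_2,c_3$. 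Heuristically $c_\ell$ should be small because a generic code has trivial automorphism group, so almost all of the $A_{n,k,\ell,q}$ codes lie in free orbits.

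The head of the chain is stated with a minimum not because $B_{n,k,0,q}>B_{n,k,2,q}$ is harder, but because the mutual order of the two leading terms fluctuates: the observed reversals such as $|\mathcal{B}_2(11,4,0)|<|\mathcal{B}_2(11,4,1)|$ arise because LCD codes tend to sit in free or nearly free orbits (small $c_0$) while one-dimensional-hull codes are more constrained and carry larger automorphism groups (larger $c_1$), and at the same time $A_{n,k,0,q}/A_{n,k,1,q}$ degenerates to roughly $q-1$ (equal to $1$ when $q=2$, by Lemmas~\ref{lem-oddq} and~\ref{lem-evenq}), leaving no margin between them. Both required bounds $B_{n,k,0,q}>B_{n,k,2,q}$ and $B_{n,k,1,q}>B_{n,k,2,q}$, however, follow from the very same control on $c_2$ used in the tail, since each of $A_{n,k,0,q}$ and $A_{n,k,1,q}$ exceeds $A_{n,k,2,q}$ by a factor comfortably larger than $1+c_2$ (at least $\frac{q^2-1}{2}$ for $A_{n,k,1,q}$, and a like margin for $A_{n,k,0,q}$ via $A_{n,k,0,q}>A_{n,k,1,q}$).

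The hard part, and the reason the statement is only a conjecture, is the uniform control of $c_\ell$, equivalently of $\sum_{g\neq 1}\mathrm{Fix}_\ell(g)$, as a function of the hull dimension. There is no counterpart of the formulae of \cite{LiShiLing} for the number of hull-dimension-$\ell$ codes invariant under a prescribed non-identity monomial transformation, and the way automorphism-group orders are distributed across hull dimensions is exactly what is not understood: codes with large hulls are closer to self-orthogonal and tend to carry larger automorphism groups, so $c_{\ell+1}$ may grow with $\ell$, and one must show quantitatively that it never overtakes the geometric growth of the threshold. An appealing alternative would be to build a $G$-equivariant injection from the hull-dimension-$(\ell+1)$ codes into the hull-dimension-$\ell$ codes, descending directly to an injection of orbit sets; but the genuine failure of monotonicity at $\ell=0$ shows that no such equivariant map can exist across the whole range, so a quantitative argument resting on Theorem~\ref{thm:ineq} together with a uniform fixed-point bound appears unavoidable.
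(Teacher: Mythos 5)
First, a point of comparison: the paper does \emph{not} prove this statement. It is presented as a conjecture, supported only by the classification data in Table \ref{table1} and the tables of \cite{LiShi} and \cite{LiShiLing}. So there is no proof in the paper for your attempt to match; the question is whether your argument closes the problem on its own. It does not, and you concede this yourself in the final paragraph: everything hinges on the uniform fixed-point bound $c_{\ell+1}\le q^{\ell+1}-2$, which is never established. A reduction whose key hypothesis is left open is not a proof, so as it stands the proposal establishes nothing beyond the (correct) Burnside bookkeeping $B_{n,k,\ell,q}=(1+c_\ell)A_{n,k,\ell,q}/|G|$.

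The more serious problem is that your reduction is not merely unproven but quantitatively unworkable: the hypothesis $c_{\ell+1}\le q^{\ell+1}-2$ is false in concrete cases, including ones appearing in the paper. Take $q=2$, $n=10$, $k=5$, $\ell=4$. From Example \ref{ex-evenq}, $A_{10,5,4,2}=73440$ and $A_{10,5,5,2}=2295$, while $B_{10,5,5,2}=2$ (the two inequivalent self-dual binary codes of length $10$) and $|G|=10!$. Then $1+c_5=B_{10,5,5,2}\,|G|/A_{10,5,5,2}=2\cdot 10!/2295\approx 3162$, enormously larger than your threshold $2^{5}-1=31$. Equivalently, your chain $B_{n,k,\ell,q}\ge A_{n,k,\ell,q}/|G|>(q^{\ell+1}-1)A_{n,k,\ell+1,q}/|G|\ge B_{n,k,\ell+1,q}$ would here assert $0.02\approx 73440/10!>2$, which is absurd. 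The failure is structural: the lower bound $B_\ell\ge A_\ell/|G|$ pretends orbits have full size $|G|$, but in the regime of large hull dimension the codes are rare ($A_{\ell+1}\ll|G|$) yet highly symmetric, so orbits are tiny and $c_{\ell+1}$ is huge --- your own heuristic that ``almost all codes lie in free orbits'' breaks down exactly where you need it, and it also degrades as $n$ grows with $k$ fixed, since $n!$ eventually dwarfs $A_{n,k,\ell,q}$ for every $\ell$. Any viable version of this strategy must instead compare the \emph{average} automorphism-group orders (equivalently the ratios $(1+c_\ell)/(1+c_{\ell+1})$) across consecutive hull dimensions, and that comparison is precisely the unknown content of the conjecture; your proposal therefore does not reduce it to anything more tractable. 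The same objection applies to your treatment of the head of the chain, which again assumes $1+c_2$ is dominated by the $A$-ratios.
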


We do not include $q>3$ in this conjecture because of the result in \cite{Carlet_Pellikaan} that any linear code over $\F_q$ for $q > 3$ is equivalent to a Euclidean LCD code.

\begin{table}
\caption{Classification of linear codes with different hull dimensions\label{table1}}
{\footnotesize
\begin{tabular}{r|c|c|c|c|c|c|c|c|c|c|c|c|c|c|c}
\hline\noalign{\smallskip}
\multicolumn{16}{c}{$q=2$, $k=4$}\\
\hline
\multicolumn{6}{c|}{$n=13$}& \multicolumn{5}{c|}{$n=15$}& \multicolumn{5}{c}{$n=17$}\\
\hline
$\ell=$    &  0 &  1& 2& 3& 4&0& 1& 2& 3& 4&0& 1& 2& 3& 4\\
\hline
 &	1363&	1635&	830&	200&	36&	4876&5704&2761&597&98&	16092&	18222&	8363& 1638&	245\\
\noalign{\smallskip}\hline
\multicolumn{6}{c|}{$n=14$}& \multicolumn{5}{c|}{$n=16$}& \multicolumn{5}{c}{$n=18$}\\
\hline
$\ell=$    &  0 &  1& 2& 3& 4&0& 1& 2& 3& 4&0& 1& 2& 3& 4\\
\hline
 &2733&	2835&	1710& 288&	75&	9265&	9664&	5284&	850&	190&29160&30171&15147&2323&444\\
\noalign{\smallskip}\hline
\end{tabular}
\begin{tabular}{r|c|c|c|c|c|c|c|c|c|c|c|c}
\hline\noalign{\smallskip}
\multicolumn{13}{c}{$q=2$, $k=5$}\\
\hline
\multicolumn{7}{c|}{$n=13$}& \multicolumn{6}{c}{$n=15$}\\
\hline
$\ell=$    &  0 &  1& 2& 3& 4&5&0& 1& 2& 3& 4&5\\
\hline
 &	4576&	5943&	3065&	950&	167&	23&33711&42016&19799&5413&	785&	94\\
\noalign{\smallskip}\hline
\multicolumn{7}{c|}{$n=14$}& \multicolumn{6}{c}{$n=16$}\\
\hline
$\ell=$    &  0 &  1& 2& 3& 4&5&0& 1& 2& 3& 4&5\\
\hline
 &12103&16798&7142&2606&	296&	61&	88102&	112633&	45837&13722&1402&228\\
\noalign{\smallskip}\hline
\end{tabular}
\begin{tabular}{r|c|c|c|c|c|c|c|c|c|c|c|c|c|c}
\hline\noalign{\smallskip}
\multicolumn{15}{c}{$q=2$, $k=6$}\\
\hline
\multicolumn{8}{c|}{$n=13$}& \multicolumn{7}{c}{$n=14$}\\
\hline
$\ell=$    &  0 &  1& 2& 3& 4&5&6&0& 1& 2& 3& 4&5&6\\
\hline
 &	9036&	11799&	6425&	2007&	432&60&6&37982&44759&26340&6265&	1661&135&27\\
\noalign{\smallskip}\hline
\end{tabular}
\begin{tabular}{r|c|c|c|c|c|c|c|c|c|c|c|c|c|c|c}
\hline\noalign{\smallskip}
\multicolumn{16}{c}{$q=3$, $k=4$}\\
\hline
\multicolumn{6}{c|}{$n=11$}& \multicolumn{5}{c|}{$n=12$}& \multicolumn{5}{c}{$n=13$}\\
\hline
$\ell=$    &  0 &  1& 2& 3& 4&0& 1& 2& 3& 4&0& 1& 2& 3& 4\\
\hline
 &	4511&	3096&	814&	115&	10&	16004&10337&2390&291&26&	57158&	34508&	7047& 723&	52\\
\noalign{\smallskip}\hline
\end{tabular}
\begin{tabular}{r|c|c|c|c|c|c|c|c|c|c|c|c}
\hline\noalign{\smallskip}
\multicolumn{13}{c}{$q=3$, $k=5$}\\
\hline
\multicolumn{7}{c|}{$n=11$}& \multicolumn{6}{c}{$n=12$}\\
\hline
$\ell=$    &  0 &  1& 2& 3& 4&5&0& 1& 2& 3& 4&5\\
\hline
 &	16769&	10942&	2567&	348&	37&	3&138865&82124&15881&1694&	137&	10\\
\noalign{\smallskip}\hline
\end{tabular}
\begin{tabular}{r|c|c|c|c|c|c|c}
\hline\noalign{\smallskip}
\multicolumn{8}{c}{$q=3$, $k=6$, $n=12$}\\
\hline
$\ell=$    &0& 1& 2& 3& 4&5&6\\
\hline
 &	314870&	179578&	32993&	3172&	267&	20&	3\\
\noalign{\smallskip}\hline
\end{tabular}
}
\end{table}

By the end of this section, we classify binary and ternary linear, self-orthogonal and LCD codes of a given dimension $3\le k\le 10$, length $k+3\le n\le 20$, minimum distance $d\ge 2$ and dual distance $d^\perp\ge 2$.
 In the binary case, we classify also the even codes with the corresponding length and dimension, i.e. all linear codes whose codewords have only even weights. We do not count codes with dual distance $d^\perp=1$ because if $\mathcal{C}$ is an $[n,k,d]$ code with $d^\perp=1$, then all its codewords share a common zero coordinate, so $\mathcal{C}=(0|\mathcal{C}_1)$ where $\mathcal{C}_1$ is an $[n-1,k,d]$ code. In this case, $\mathcal{C}^\perp=(0|\mathcal{C}_1^\perp)\cup (1|\mathcal{C}_1^\perp)$, $\hull(\mathcal{C})=(0|\hull(\mathcal{C}_1))$ and $\dim \hull(\mathcal{C})=\dim\hull(\mathcal{C}_1)$.
  If $\mathcal{C}$ is an $[n,k,1]$ code then $\mathcal{C}\cong (0|\mathcal{C}_1)\cup (1|\mathcal{C}_1)$, where $\mathcal{C}_1$ is an $[n-1,k-1]$ code, and then $\mathcal{C}^\perp\cong (0|\mathcal{C}_1^\perp)$. This gives us that $\hull(\mathcal{C})=(0|\hull(\mathcal{C}_1)$ and $\dim \hull(\mathcal{C})=\dim\hull(\mathcal{C}_1)$.
  Therefore, if we have the number $B_{n,k,0,q}^*$ of all $[n,k,\ge 2]$ LCD codes with dual distance $\ge 2$ for all length $\le n$ and dimensions $\le k$, we can easily compute the number $B_{n,k,0,q}$ of LCD $[n,k,\ge 1]$ codes with dual distance $\ge 1$, using the following formula
$$B_{n,k,0,q}=\sum_{m=k+1}^{n}B_{m,k,0,q}^*+B_{n,k-1,0,q}.$$

We give a simple example that can be followed by hand.

\begin{example}\rm Let $n=4$, $k=2$ and $q=2$. In this case
$$A_{4,2,2,2}=\sigma_{4,2}=3, \ A_{4,2,1,2}=12, \ A_{4,2,0,2}=20, \
B_{4,2,2,2}=B_{4,2,1,2}=1, \ B_{4,2,0,2}=4.$$

There are six inequivalent $[4,2]$ binary codes. The first one is obtained from $\F_2^2$ by adding two zero columns and it is an LCD code. There are two more binary $[4,2]$ inequivalent codes with zero columns, and these are the codes $\mathcal{C}_2=\{0000,0110,0001,0111\}$, $\dim \hull(\mathcal{C}_2)=1$, and $\mathcal{C}_3=\{0000,0110,0101,0011\}$, $\dim \hull(\mathcal{C}_3)=0$. The remaining three codes are $\mathcal{C}_4=\{0000,1110,0001,1111\}$, $\dim \hull(\mathcal{C}_4)=0$, $\mathcal{C}_5=\{0000,1110,0101,1011\}$, $\dim \hull(\mathcal{C}_5)=0$, and the self-dual $[4,2,2]$ code $\mathcal{C}_6=\{0000$, $1100,0011,1111\}$, $\dim \hull(\mathcal{C}_6)=2$. Only two of these codes have minimum and dual distance $d=d^\perp=2$ and these are $\mathcal{C}_5$ and $\mathcal{C}_6$.
\end{example}

We use the following properties of the considered types of binary and ternary codes:
\begin{itemize}
\item If $\mathcal{C}$ is an LCD code, its dual code $\mathcal{C}^\perp$ is also LCD. It follows that $B_{n,k,0,q}^*=B_{n,n-k,0,q}^*$ and $B_{n,k,0,q}=B_{n,n-k,0,q}$ both in the binary and the ternary case. The same holds for the number of all linear codes of length $n$ and dimension $k$.
\item Self-orthogonal codes exist only when $k\le n/2$.
\item All binary self-orthogonal codes are even.
\item Ternary self-dual codes exist only for lengths a multiple of 4 and only have codewords of Hamming weight a multiple of 3.
\end{itemize}

We obtain the classification results by the program \textsc{Generation} of the software package \textsc{QExtNewEdition} \cite{Generation}. The computations were executed on a Windows 11 OS in a single core of an Intel Xeon Gold 5118 CPU with a 2.30 GHz clock frequency.

In the binary case, we classify all linear, even, self-orthogonal and LCD codes of length $n\le 20$ and dimension $k\le 10$. The results are presented in Table \ref{table2}. For some values of $n$ and $k$, when all inequivalent codes are too many (more than a milion), we classify only optimal $[n,k]$ codes. In this cases, we put a $*$ after the number of codes. Consider for example $n=17$ and $k=7$. The largest possible minimum distance for a binary $[17,7]$ code is $d=6$ (see \cite{Grassl-table}). There are exactly 377 binary linear $[17,7,6]$ codes. Exactly 329 of these codes are even, 7 are LCD, but none of them is self-orthogonal. Furthermore, there are 497119 even, 58 self-orthogonal, and 14 734 654 LCD $[17,7,\ge 2]$ binary codes with dual distance $d^\perp\ge 2$.  Since we have the count of both all $[17,7,\ge 2]$ and optimal $[17,7,6]$ even codes without zero columns, we denote this in the table by 497119(329*). The optimal binary self-orthogonal codes with the parameters presented in Table \ref{table2} have been also classified in \cite{BBGO}.

The results in the table confirm the fact that LCD codes are much more than self-orthogonal and even more than even codes for a given length and size, even when considering only inequivalent codes. However, this is not the case if we consider only the optimal codes. In most cases, when the optimal code is unique, it is not LCD. The optimal $[19,7,8]$ is self-orthogonal, as is the optimal $[20,8,8]$ code. The optimal $[17,8,6]$ and $[17,9,5]$ codes are LCD, but the optimal $[19,8,7]$, $[20,9,7]$ and $[18,9,6]$ are neither self-orthogonal, nor LCD. We see interesting examples in the optimal codes of dimension 7. Out of all 377 optimal $[17,7,6]$ codes, none is self-orthogonal, but 7 are LCD codes. We have the opposite situation for length 20, namely out of all 26 optimal $[20,7,8]$ codes, none is LCD, but four are self-orthogonal. It is also worth noting the optimal $[20,10,6]$ codes, where out of all 1682 codes only one is odd-like (it contains codewords of odd weight), but none of the 1681 even codes is self-orthogonal.

In the ternary case, we consider linear, self-orthogonal and LCD codes. We classify all $[n,k,\ge 2]$ linear, LCD and SO codes without zero columns for $k=3$ and $n\le 20$, $k=4$ and $n\le 15$, $k=5$ and $n\le 14$, $6\le k\le 9$ and $n\le 13$. For $k=7$ and $n=15,16,17$, $k=8$ and $n=16$, 17, 18, and $k=9$, $n=19$ we classify only the self-orthogonal codes.  The question mark (?) in the table means that there are too many corresponding codes (more than a million).  Furthermore, we classify all optimal codes of these three types with 21 parameters. The results are presented in Table \ref{table3}. 

If we look at the number of codes with dimension 5 and length $n$, $10\le n\le 14$, we see that the LCD codes are more than half of all inequivalent $[n,5,\ge 2]$ linear codes. The situation with self-orthogonal codes is quite different - they occur much less often. For example, for length 13, the linear codes are more than a million, but only 17 of them are self-orthogonal. For the optimal codes, we have: (1) of four $[15,5,8]$ linear codes one is LCD (none is self orthogonal since 8 is not a multiple of 3), (2) the only $[16,5,9]$ code is self-orthogonal, (3) there are 1804 linear $[17,5,9]$ codes, 35 of which are self-orthogonal and 400 are LCD, (4) none of the seven $[18,5,10]$ and both $[19,5,11]$ codes is LCD, nor self-orthogonal, (5) there are two linear $[20,5,12]$ codes and both are self-orthogonal.

We have interesting results with the optimal $[19,7,9]_3$ and $[20,8,9]_3$ codes. In these cases, all optimal linear codes are self-orthogonal.

\begin{table}
\caption{Classification of binary linear codes $(d^\perp\ge 2)$\label{table2}}
{\footnotesize
\begin{tabular}{r|c|c|c|c|c|c|c|c|c|c|c|c|c|c|c}
\hline\noalign{\smallskip}
\multicolumn{16}{c}{$k=3$} \\
\hline
$n=$    & 6 & 7 & 8 & 9 & 10& 11 &  12& 13& 14& 15& 16& 17& 18& 19& 20 \\
\hline
linear  & 8 & 15& 27& 45& 71& 107& 159&	226&	317&	435&587&	779&1024&1325&1699\\
even    & 3 &  4&  8&  9& 17&  20&	34&	39&	61&	72&	106&	123&	174&	204&	277\\
SO      & 1 &  1&  3&  1&  6&   2&	12&	4&	21&	7&	34&	11&	54&	19&	82\\
LCD     & 2 &  5&  7& 17& 20&  42&	47&	91&	98&	180&	189&	328&	340&	565&	580\\
\noalign{\smallskip}\hline
\end{tabular}
\begin{tabular}{r|c|c|c|c|c|c|c|c|c|c|c|c|c|c}
\hline\noalign{\smallskip}
\multicolumn{15}{c}{$k=4$} \\
\hline
$n=$    & 7 & 8 & 9 & 10& 11 &  12& 13& 14& 15& 16& 17& 18& 19& 20 \\
\hline
linear  & 15& 42& 100&	222&	462&	928&	1782&	3333&	6058&	10759&	18694&	31877&	53357&	87864\\
even    &  4& 10&  18&	37&	63&	122&	202&	366&	602&	1038&	1671&	2785&	4411&	7122\\
SO      &  -&  2&   1&	6&	3&	16&	8&	39&	23&	92&	55&	199&	131&	424\\
LCD     &  5& 16&  30&	82&	139&	345&	568&	1267&	2040&	4193&	6631&	12720&	19734&	35732\\
\noalign{\smallskip}\hline
\end{tabular}
\begin{tabular}{r|c|c|c|c|c|c|c|c|c|c|c|c|c}
\hline\noalign{\smallskip}
\multicolumn{14}{c}{$k=5$} \\
\hline
$n=$    &  8 & 9 & 10& 11 &  12& 13& 14& 15& 16& 17& 18& 19& 20 \\
\hline
linear  & 	27&	100&	331&	1007&	2936&	8208&	22326&	59235&	153711&	390607&	972726&	2373644&	5676542\\
even    &  	7&	16&	46&	102&	264&	593&	1448&	3319&	7886&	18096&	42193&	96243&	219712\\
SO      &  	-&	-&	2&	2	&11&	8&	38&	33&	134&	123&	442&	462&	1450\\
LCD     &  	7&	30&	84&	297&	816&	2596&	6908&	20238&	52248&	142468&	355083&	908879&	2177772\\
\noalign{\smallskip}\hline
\end{tabular}
\begin{tabular}{r|c|c|c|c|c|c|c|c|c|c|c|c}
\hline\noalign{\smallskip}
\multicolumn{13}{c}{$k=6$} \\
\hline
$n=$    &  9 & 10& 11 &  12& 13& 14& 15& 16& 17& $[18,6,8]$& $[19,6,8]$& $[20,6,8]$\\
\hline
linear  & 	45&	222&	1007&	4393&	18621&	78148&	325815&	1350439&	5548052&2*	& 28* &1833*\\
even    &  	9&	30&	92&	303&	945&	3166&	10576&	37017&	131233&	2*&	21*&	1418*\\
SO      &  	-&	-&	-&	3&	3&	21&	21&	105&	123&	521(2*)&	746(2*)&	2758(23*)\\
LCD     &  	17&	82&	297&	1418&	5632&	25954&	108846&	484648&	2034711&	8633817(0*)	&	2* &392*\\
\noalign{\smallskip}\hline
\end{tabular}
\begin{tabular}{r|c|c|c|c|c|c|c|c|c|c|c}
\hline\noalign{\smallskip}
\multicolumn{12}{c}{$k=7$} \\
\hline
$n=$    &  10& 11 &  12& 13& 14& 15& 16& $[17,7,6]$& $[18,7,7]$& $[19,7,8]$)& $[20,7,8]$\\
\hline
linear  &		71&	462&	2936&	18621&	121169&	814087&	5635181&	377*&	2*&1*	&	26*\\
even    &		13&	46&	194&	774&	3518&	16714&	87998&	497119(329*)&	3010238(0*)&	1*&	21*\\
SO      &		-&	-&	-&	-&	4&	6&	41&	58(0*)&	300(0*)&	540(1*)&	2469(4*)\\
LCD     &		20&	139&	816&	5632&	37166&	272131&	1968462&	14734654(7*)&	0*&	0*&	0*\\
\noalign{\smallskip}\hline
\end{tabular}
\begin{tabular}{r|c|c|c|c|c|c|c|c|c|c}
\hline\noalign{\smallskip}
\multicolumn{11}{c}{$k=8$} \\
\hline
$n=$    &  11 &  12& 13& 14& 15& 16& $[17,8,6]$& $[18,8,6]$& $[19,8,7]$& $[20,8,8]$\\
\hline
linear  &		107&	928&	8208&	78148&	814087&	9273075&	1*&	918*&	1*&	1*\\
even    &		16&	76&	362&	2020&	12646&	94136&	818890(1*)&	907*&	0*&	1*\\
SO      &		-&	-&	-&	-&	-&	7&	10(0*)&	86(0*)&	168(0*)&	1016(1*)\\
LCD     &		42&	345&	2596&	25954&	272131&	3315862&  1*&	337*&	0*&	0*\\
\noalign{\smallskip}\hline
\end{tabular}
\begin{tabular}{r|c|c|c|c|c|c|c|c|c}
\hline\noalign{\smallskip}
\multicolumn{10}{c}{$k=9$} \\
\hline
$n=$    &   12& 13& 14& 15& 16& $[17,9,5]$& $[18,9,6]$& $[19,9,6]$& $[20,9,7]$\\
\hline
linear  & 	159&	1782&	22326&	325815&	5635181&	1*&	1*&	1700*&	1*\\
even    &  	22&	109&	689&	4973&	46344&	554238(0*)&	8547530(1*)&	1694*&	0*\\
SO      & 	-&	-&	-&	-&	-&	-&	9(0*)&	22(0*)&	194(0*)\\
LCD     & 	47&	568&	6908&	108846&	1968462&	1*&	0*&	3*&	0*\\
\noalign{\smallskip}\hline
\end{tabular}
\begin{tabular}{r|c|c|c|c|c|c|c|c}
\hline\noalign{\smallskip}
\multicolumn{9}{c}{$k=10$} \\
\hline
$n=$    &   13& 14& 15& 16& $[17,10,4]$& $[18,10,4]$& $[19,10,5]$& $[20,10,6]$\\
\hline
linear  & 	226	&3333&	59235&	1350439&	14390*&	11581361*&	31237*&	1682*\\
even    &  	26&	165&	1230&	12257&	169691(2614*)&	3433243(263147*)&	0*&	1681*\\
SO      &  	-	&-&	-&	-&	-&	-&	-&	16(0*)\\
LCD     &  	91&	1267&	20238&	484648&	14734654(4550*)&	4535834*&	11554*&	601*\\
\noalign{\smallskip}\hline
\end{tabular}
}
\end{table}

\begin{table}
\caption{Classification of ternary linear codes $(d^\perp\ge 2)$\label{table3}}
{\footnotesize
\begin{tabular}{r|c|c|c|c|c|c|c|c|c|c|c|c|c|c|c}
\hline\noalign{\smallskip}
\multicolumn{16}{c}{$k=3$} \\
\hline
$n=$    & 6 & 7 & 8 & 9 & 10& 11 &  12& 13& 14& 15& 16& 17& 18& 19& 20 \\
\hline
linear  &14&	31&	68&	137&	263&	484&	878&	1538&	2649&	4474&	7421&	12093&	19420&	30680&	47793\\
SO      & 0&	1&	1&	3&	4&	5&	10&	15&	17&	31&	44&	54&	91&	126&	160\\
LCD     & 7&	15&	33&	67&	132&	253&	471&	839&	1491&	2560&	4294&	7142&	11583&	18423&	29070\\
\noalign{\smallskip}\hline
\end{tabular}
\begin{tabular}{r|c|c|c|c|c|c|c|c|c|c|c|c}
\hline\noalign{\smallskip}
\multicolumn{13}{c}{$k=4$} \\
\hline
$n=$    & 7 & 8 & 9 & 10& 11 &  12& 13& 14& 15& $[16,4,9]$& $[17,4,10]$& $[18,4,11]$ \\
\hline
linear &31&	129&	460&	1638&	5701&	19996&	69536&	239681&	809694&	317*&	18*&	2*\\
SO &-&	1&	1&	3&	5&	16&	26&	52&	121&	255(13*)&	523&	1267\\
LCD &15&	33&	220&	839&	3077&	11228&	40668&	144447&	497679&	124*&	8*&	0*\\
\noalign{\smallskip}\hline
\end{tabular}
\begin{tabular}{r|c|c|c|c|c|c|c|c|c|c|c}
\hline\noalign{\smallskip}
\multicolumn{12}{c}{$k=5$} \\
\hline
$n=$    & 8 & 9 & 10& 11 &  12& 13& 14& $[15,5,8]$& $[16,5,9]$& $[17,5,9]$& $[18,5,10]$ \\
\hline
linear &68&	460&	3221&	24342&	202064&	1767647&	15604611&	4*&	1*&	1804*&	7*\\
SO &-&	-&	0&	3&	7&	17&	44&	156&	523(1*)&	1981(35*)&	9460\\
LCD &33&	220&	1681&	13537&	118878&	1080479&	9737965&	1*&	0*&	400*&	0*\\
\noalign{\smallskip}\hline
\end{tabular}
\begin{tabular}{r|c|c|c|c|c|c|c|c|c|c|c|c|c}
\hline\noalign{\smallskip}
\multicolumn{6}{c|}{$k=6$}&\multicolumn{8}{c}{$k=7$} \\
\hline
$n=$    & 9 & 10& 11 &  12& 13&10& 11 &  12& 13 & 15& 16& 17&20 \\
\hline
linear &	137&	1638&	24342&	474106&	10956955&263&	5701&	202064&	10956955&	?&	?&	?&?	\\
SO &-&	-&	-&	3&	4&	-&		-&	-&	-&	12&	50&	249&	2287775\\
LCD &	67&	839&	13537&	283650&	6807504&132&	3077&	118878&	6807504&	?&	?&	?&?	\\
\noalign{\smallskip}\hline
\end{tabular}
\begin{tabular}{r|c|c|c|c|c|c|c|c|c|c}
\hline\noalign{\smallskip}
\multicolumn{7}{c|}{$k=8$}& \multicolumn{4}{c}{$k=9$}\\
\hline
$n=$    &  11 &  12& 13& 16& 17& 18& 12& 13& 14& 19\\
\hline
linear &	484&	19996&	1767647&	?&	?&	?&	878	&69536&	15604611&	?\\
SO &-&		-&	-&	7&	16&	137&-&		-&	-&	56\\	
LCD &	253&	11228&	1080479&	?&	?&	?&	471&	40668&	11835111&	?\\
\noalign{\smallskip}\hline
\end{tabular}\\
\begin{tabular}{r|c|c|c|c|c|c|c|c}
\hline\noalign{\smallskip}
\multicolumn{9}{c}{Optimal codes} \\
\hline
    & $[19,4,12]$& $[20,4,12]$& $[19,5,11]$& $[20,5,12]$& $[14,6,6]$& $[15,6,7]$& $[16,6,7]$& $[17,6,8]$ \\
\hline
linear&	1*&	84*&	2*&	2*&	47674*&	22*&	$>10^8*$&2145181*\\
SO &2867(1*)&	6893(32*)&	50618&	294990(2*)&	15(4*)&	61&	286&	1504	\\
LCD &	0*&	14*&	0*&	0*&	27776*&	10*&	53236943*&	807993*	\\
\noalign{\smallskip}\hline
\end{tabular}
\begin{tabular}{r|c|c|c|c|c|c|c}
\hline\noalign{\smallskip}
\multicolumn{8}{c}{Optimal codes} \\
\hline
$n=$     & $[18,6,9]$& $[19,6,9]$ & $[18,7,8]$& $[19,7,9]$&  $[19,8,8]$& $[20,8,9]$ & $[20,9,8]$\\
\hline
linear &	171*&	?&	827459*&	61*&	1508*&	23*&	32*\\
SO &	13831(105*)&	184980(18019*)&	2486&	57551(61*)&2281&	112899(23*)&	1122\\
LCD &	4*&	?&	450403*&	0*&		363*&	0*&	2*\\
\noalign{\smallskip}\hline
\end{tabular}
}
\end{table}

\section{Conclusion}
\label{sec_conclusion}

By a result of Sendrier \cite{Sendrier_hull}, it is known that most linear codes are LCD when $q$ is large. Moreover, the proportion of $q$-ary linear codes of length $n$, dimension $k$ and specified hull dimension $\ell$ to all $q$-ary linear codes of the same length and dimension is convergent when $n$ and $k$ goes to infinity. Using the limit, Sendrier proved that the average dimension of the hull of a $q$-ary linear code is asymptotically equal to $\sum_{i\ge 1}\frac{1}{q^i+1}$ \cite{Sendrier_hull}.

In this paper, we obtain general results on hulls of linear codes, proving that the number of all $q$-ary linear codes of a given length $n$ and dimension $k$ decreases when the hull dimension increases, for all values of $n$, $k$ and $q$.

In addition, we classify all binary linear, even, self-orthogonal and LCD $[n\le 20, k\le 10, d\ge 2]$ codes and the ternary linear, self-orthogonal and LCD
$[n\le 19, k\le 10, d\ge 2]$ codes (with a few exceptions). For some considered values of $n$ and $k$, when the number of all inequivalent linear codes is huge, we classify only the optimal codes. The results are listed in Tables \ref{table2} and \ref{table3}.

\section*{Acknowledgments}
The research of Stefka Bouyuklieva is supported by Bulgarian National Science Fund grant number KP-06-H62/2/13.12.2022. The research of Iliya Bouyukliev is supported by project IC-TR/10/2024-2025. The research of Ferruh \"{O}zbudak is supported by T\"{U}B\.{I}TAK under Grant 223N065.
	
%
%


%
%



\end{document}